\documentclass[11pt,letterpaper]{article}
\usepackage[margin=1in]{geometry}
\pdfoutput=1

\usepackage{times}
\usepackage{helvet}
\usepackage{courier}
\usepackage{amsthm, amssymb, amsmath}
\usepackage{graphicx}
\usepackage{verbatim}
\usepackage{empheq}
\usepackage{xifthen}
\usepackage{hyperref}
\usepackage{refcount}
\usepackage{cuted}
\usepackage{widetext}
\usepackage{float}

\def\E#1{{\mbox{E}}\left[ #1 \right]}
\def\EE{{\mbox{E}}}
\def\Ex#1{{\mbox{E\,}}#1}

\def\bpi{\boldsymbol{\pi}}

\def\flatLossName{{\mbox{{\it flat}}}}
\def\unsupLossName{{\mbox{{\it var}}}}
\def\eqpun{\hspace{0.5em}}

\makeatletter
\def\@copyrightspace{\relax}
\makeatother

\listfiles

\newtheorem{theorem}{Theorem}
\newtheorem{lemma}{Lemma}

\newtheorem{proposition}{Proposition}
\newtheorem{definition}{Definition}
\newtheorem{corollary}{Corollary}

\begin{document}

\title{Incentives for Truthful Peer Grading}

\author{ Luca de Alfaro \;\; Michael Shavlovsky \;\; Vassilis Polychronopoulos\\
luca@ucsc.edu, \{mshavlov, vassilis\}@soe.ucsc.edu\\
       {Computer Science Department}\\
       {University of California}\\
       {Santa Cruz, CA, 95064, USA}\\
\\
Technical Report UCSC-SOE-15-19
}

\date{November 2015}

\maketitle

\begin{abstract}
Peer grading systems work well only if users have incentives to grade truthfully.
An example of non-truthful grading, that we observed in classrooms,
consists in students assigning the maximum grade to all submissions. 
With a naive grading scheme, such as averaging the assigned grades,
all students would receive the maximum grade. 
In this paper, we develop three grading schemes that provide incentives for truthful peer grading.
In the first scheme, the instructor grades a fraction $p$ of the
submissions, and penalizes students whose grade deviates from the
instructor grade. 
We provide lower bounds on $p$ to ensure truthfulness, and conclude
that these schemes work only for moderate class sizes, up to a few
hundred students. 
To overcome this limitation, we propose a hierarchical extension of
this supervised scheme, and we show that it can handle classes of any
size with bounded (and little) instructor work, and is therefore applicable to Massive Open Online Courses (MOOCs).
Finally, we propose unsupervised incentive schemes, in which the student
incentive is based on statistical properties of the grade
distribution, without any grading required by the instructor.
We show that the proposed unsupervised schemes provide incentives to
truthful grading, at the price of being possibly unfair to
individual students. 
\end{abstract}

\sloppy

\section{Introduction}

A peer grading system works well only if students put effort in evaluating their peer's work, and produce reasonably accurate evaluations.
This is hard work.  
To motivate students, a natural solution consists in assigning to each student an overall assignment grade that combines both the grade received by their submitted solution, and their accuracy in grading other students' work. 
The grading accuracy of a student can be measured from the difference between the grades assigned by the student, and the consensus grade the system computes for each submission.

Unfortunately, such a simple evaluation scheme can easily be gamed: students can collude to both avoid work, and receive high grades.  
The simplest way for students to collude consists in assigning the maximum grade to all submissions: in this way, each student spends zero time evaluating other people's work, while receiving both a top grade for her own submission, and a top grade for her review precision, the latter since all grades for all submissions are in perfect agreement.
We have seen this behavior arise in real classes. 
Once a nucleus of students starts to assign top grades to all the submissions they review, other initially honest students see what is happening, and join the colluders, both to save work in reviewing, and to avoid being penalized in review precision as the only honest students who disagree with their colluding peers. 

A radical way to eliminate collusion on grades consists in eliminating
grades altogether, asking instead students to rank the submissions
they review in quality order. 
A global ordering can then be constructed using rank aggregation
methods \cite{ailon2010aggregation, rankaggregation}, and grades can be assigned via curving mechanisms,
for instance, via the instructor assigning grades to some of the
submissions, and deriving the remaining grades via interpolation. 
As we briefly discuss in Section~\ref{sec-problem} (see
\cite{sigcse2014} for a more in-depth discussion), we have
experimented with rank-based mechanisms for classroom grading.
While we indeed found that they could be precise, the rank-based tool
we built was not well received by students;
the acceptance of the tool for classroom use increased markedly when
we moved from rank-based to grade-based mechanisms. 
We do not wish to generalize our experience and claim that
grade-based crowd-evaluations provide a universally better student experience than
mechanisms based on ranking. 
The difference might have lied in how the rank-based tool was designed, or in how
it was presented to students, or in some other factor of our experience.
Nevertheless, since grades are a common and time-tested method for
evaluating homework, building incentive systems for grade-based peer-grading that
promote accurate evaluations is an interesting research question. 
While we present our work in the context of classroom peer-grading
tool, the incentive schemes we develop can be applied to any kind of
peer-grading setting. 

In this paper, we examine the question of how to construct incentive
systems for peer-grading systems that promote accurate grading while
preventing collusion. 
We propose two classes of incentive schemes: {\em supervised,} and {\em unsupervised.}

In the supervised schemes, the instructor grades a small number of submissions, and the structure of the incentive system will ensure that the small amount of work by the instructor nevertheless suffices to discourage collusion. 
We propose two such supervised schemes. 
The first is {\em flat:\/} the instructor simply grades some submissions, creating a non-zero probability for each student that one of the submissions they reviewed is also reviewed by the instructor. 
This scheme works well for small classes (a few hundred students at most), but cannot scale, as the amount of work by the instructor needs to be proportional to the number of students. 
The second scheme we propose is {\em hierarchical.\/} 
The participants are organized in a tree, with the instructor as root, the submissions as leaves, and the students filling the intermediate levels. 
For each edge of the tree, the parent node shares with the child node a submission they both reviewed; the child's review precision is evaluated by comparing the child and parent grades on this shared submission.
The tree is built at random, and students at all levels of the tree
perform the same task: they review submissions.  
In particular, there is no meta-review involved.
We show that in our proposed hierarchical scheme, a bounded and small amount of work by the instructor suffices to discourage collusion and reward accuracy in arbitrarily large classes, dedicating only a fixed and small percentage of the students to the role of lieutenants that will help in evaluating the work of their underlings.
The result holds provided that students act to maximize their personal benefit, measured as their overall grade. 
We express the result in game-theoretic terms: we show that being accurate is a Nash equilibrium for students, and that it provides better reward than any other Nash equilibrium. 

We also present an unsupervised incentive scheme, which does not require any grading by the instructor. 
To develop the scheme, we assume that the expected true grade distribution for the submissions in the assignment is known. 
This is often true in practice, since previous experience teaching the class, and testing students with non-peer grading methods or the supervised schemes above, can yield information on the true grade distribution.
The knowledge of the expected true grade distribution can be used to create an incentive scheme such that the most beneficial Nash equilibrium in the resulting game is achieved when students are truthful. 
The drawback of such an unsupervised incentive scheme, however, is that it is not {\em individually fair:\/} the reward of a student depends on the {\em global\/} lack of collusion in the whole class.  
Of course, the hierarchical supervised scheme we propose is also not individually fair, as a student's reward depends on the behavior of the student's supervisors at all levels.
Nevertheless, the set of students on which an individual student's reward depends is inherently more limited in the hierarchical supervised approach, making it more acceptable in practice. 

We have implemented the supervised incentive schemes in the peer-grading tool CrowdGrader \cite{sigcse2014}. 
Before the incentive scheme was implemented, students colluded and use the strategy of giving maximum grade to every submission in many assignements, and in more than one class.  
Once the incentive scheme was implemented, the percentage of students adopting this strategy dropped to less than half (and many, if not all, of the remaining max grades are likely to be justified).

\section{Related Work}

Providing incentives to human agents to return thuthful responses is a central challenge of crowdsourcing algorithms and applications \cite{ghosh2013gamechapter}.

Prediction markets are models with a goal of obtaining predictions about events of interest from experts.
After experts provide predictions, a system assigns a reward based on a scoring rule to every expert. Proper scoring rules ensure that the highest reward is achieved by reporting the true probability distribution
\cite{winkler1968good, clemen2002incentive, johnson1990efficiency}.
The limiting assumption of the scoring rules is that the future outcome must be observable.
However, in peer review and other crowdsourcing tasks the final outcome is frequently not available.

The model presented in \cite{carvalho2013} relaxes this assumption.
The proposed scoring rule evaluates experts by
comparing them to each other.
The model assigns a higher score for an expert if her predictions are in agreement with predictions of other experts.

The peer-prediction method \cite{peerprediction} uses proper scoring rules to reward experts depending
on how good their input for predicting other experts' reports.
Similarly, the model described in \cite{kamar2012incentives} evaluates experts depending
on how good their reports are in predicting the consensus of other workers.

Other studies based on the peer-prediction method \cite{peerprediction,kamar2012incentives, jurca2006minimum} 
ensure that the truthful reporting is a Nash equilibrium.
However, such models elicit truthful answers by analyzing the consensus between experts in one form or another.
As a result these models are prone to gaming when every expert agrees to always output the same answer.
The study in \cite{jurca2005enforcing} shows that for the scoring rules proposed in the peer-prediction method \cite{peerprediction}, a strategy that always outputs ``good'' or ``bad'' answer is a Nash equilibrium  with a higher payoff that the truthful strategy.
 
The model proposed in \cite{serum} elicits truthful subjective answers on multiple choice questions.
The author shows that the truthful reporting is a Nash equilibrium with the highest payoff.
The model is different from other approaches in that besides the answers, 
workers need to provide predictions on the final distribution of answers.
A worker receives a high score if her answer is ``surprisingly'' common - the actual
percentage of her answers is larger than the predicted fraction.
There are several reasons that limit the applicability of this model in peer review grading.
First, it is not clear in what form students should provide their prediction about the final distribution over
numerical grades.
Moreover, even if we can solicit such predictions, there are not enough reviews per submission to estimate 
their distribution. In peer grading, the amount of work is linearly dependent on the number of reviewers.
For example, in CrowdGrader each submission receives about 5 reviews on average no matter how large the class is.
Finally, another assumption in the model is that there is no ground truth.
This means that two workers with different answers can be both correct.
In our setting, every submission has a unique intrinsic quality.

The model described in \cite{jurca2009mechanisms} considers a scenario of rational buyers who report on the quality of products of different types.
In the developed payment mechanism the strategy of honest reporting is the only Nash equilibrium.
However, the model requires that the prior distribution over product types and condition distributions of qualities is the common knowledge.
Such assumptions do not hold in our peer review setting.

The work in \cite{Alon:2011} studies the problem of incentives for truthfulness in a setting where persons vote other persons for a position. The analysis derives a randomized approximation technique to obtain the higher voted persons. 
The technique is strategyproof, that is, voters (which are also candidates) cannot game the system for their own benefit. 
The setting of this analysis is significantly different from ours, as the limiting assumption is that the sets of voters are votees are identical, while in peer grading the sets of reviewers and submissions are different (and in fact, a student files multiple submissions). 
The votes that people cast in \cite{Alon:2011} are binary, that is, a person votes for other persons choosing from the entire set, while in the peer-grading setting a reviewer assigns grades to a set of assigned submissions. 
Also, the study focuses on obtaining the top-$k$ voted items, while in peer-grading we are interested in assigning accurate grades to the totality of students. 
Another $k$-selection method that provides truthful incentives is proposed in \cite{kurokawaimpartial}.

A relevant previous study on peer-grading is the work in \cite{dasgupta2013crowdsourced}.
The authors develop a mechanism for soliciting answers for binary questions where
agents have endogenous proficiencies.
The strategies of agents consist in choosing the amount of effort to put into a task
and a decision on which answer to report.
The developed mechanism has the property that the truthful strategy with maximum
effort is a Nash equilibrium.
Moreover, this equilibrium yields the maximum payoff to all agents.
Similarly to our proposed unsupervised method, the scoring rule in \cite{dasgupta2013crowdsourced}
consists of two components. The first component depends on agreement with other reviewers.
The higher the agreement, the higher the payoff.
The second component of the score is a negative static term that is designed in a way that only the truthful reporting compensates it.
The applicability of this method may be limited, as the grades are only binary (high quality and low quality), whereas a range of grades is the standard practice in classrooms and what we consider in our study.
Also, it is not always practical to grant students the freedom to evaluate the assignments they feel confident about. 
Finally, the validity of the assumption of endogenous proficiencies used throughout the analysis, that is, that one can infer the fitness of evaluators for grading particular tasks based on the choice of the tasks they evaluated, is not substantiated or supported with analytical arguments or real-world data.

The PeerRank method proposed in \cite{Walsh2014} obtains the final grades of students using a fixed point equation similar to the PageRank
method. 
However, while it encourages precision, it does not provide a strategyproof method for the scenario that students collude to game the system without making the effort to grade truthfully.


\section{Problem Setting}
\label{sec-problem}

Reviewing is hard work.
In order to motivate students to perform high quality reviews of other students' work, some incentive is needed.
A simple approach consists in making the review work part of the
overall assignment grade, giving each student a review grade that is
related to the student's grading accuracy.
To measure the grading accuracy of a student, the simplest solution is to look at the discrepancy between the grades assigned by the student, and the consensus grades computed from all input on the assignment.

Unfortunately, such approach opens up an opportunity for students to
game the system.
A big enough group of students can affect the consensus grades and thus affect how they and other reviewers are evaluated.
One obvious grading strategy for a reviewer is to assign the maximum grade to every assignment they grade.
In this way, students spend no time examining the submissions, and yet get perfect grades both for their submission, and for their reviewing work.

We have observed this behavior in real classrooms.
In a class whose grading data we analyzed, held at a US
university,\footnote{Privacy restrictions prevent us from disclosing more details on the class.}
the tool CrowdGrader\footnote{www.crowdgrader.org} was used to
peer-grade homework.
The initial homework assignments were somewhat easy, so that a large
share of submissions deserved the maximum grade on their own merit.
As more homework was assigned and graded, a substantial number of
students switched to a strategy where they assigned the maximum grade
to every submission they were assigned to grade.
Submissions that had obvious flaws were getting high grades, and reviewers who did diligent work were getting low review grades because their accurate evaluations did not match the top-grade consensus for the submissions they reviewed.
Figure \ref{fig-frac-max-reviewers} displays the fraction of students who assignmed maximum grades to assignments in the class.
A surprisingly high percentage of students were giving maximum grades; the percentage rose to 60\% in the 13th assignment.
Between the 13th and 14th assignment there was a big drop in the
fraction of such students, as the instructor announced that there would be a new grading procedure introduced that would penalize such behavior.
However, the hastily-introduced procedure did not work, and the students returned to give inflated evaluations spending little time reviewing.

\begin{figure}[!ht]
\centering
\includegraphics[width=0.75\linewidth]{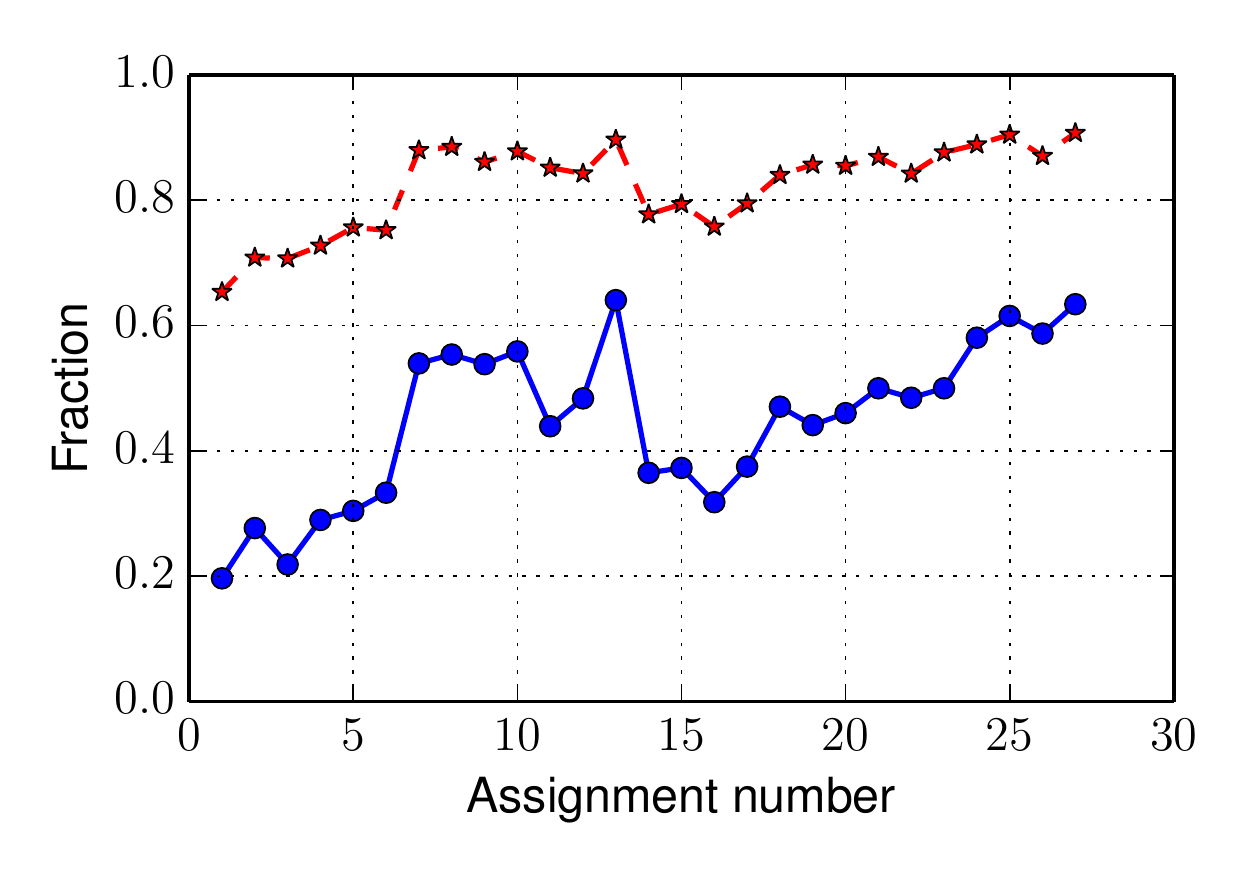}
\caption{Frequency of assignments receiving maximum grades for a class
  with 27 homework assignments and 83 students; each student graded 5
  homework submissions.
  The dashed line plots the number of maximum grades, as a
  fraction of all grades assigned, for each homework.
  The solid line plots the fraction of students who gave maximum
  grades to all the submissions they graded, for each homework.
}
\label{fig-frac-max-reviewers}
\end{figure}

We study in this paper incentive schemes that encourage students to
carefully evalue submissions, and enter accurate grades in classroom
peer-grading systems.

\subsection{Grading versus Ranking}
\label{sec-grading-vs-ranking}

A way to eliminate the collusion on grades that grade-based evaluation
makes possible consists in asking students to rank submissions in
quality order, rather than assign a grade to each of them.
The ranks provided by each student can then be aggregated in a single
overall ranking using {\em rank aggregation\/} techniques that have
been very widely studied (see \cite{ailon2010aggregation}, and for a
survey and general framework, \cite{rankaggregation}).
If desired, the ranking can then be converted to grades via curving
methods.
While incentive systems are still needed to ensure student take the
time to provide truthful rather than random rankings, they are
intrinsically resistant to many types of collusion.
These mechanisms have been studied in the literature as
alternatives to students assigning grades \cite{Raman2014}.

We built a tool, CrowdRanker, to experiment with
peer grading based on ranking and rank aggregation.
While precise, CrowdRanker was intensely disliked by students in our
university \cite{sigcse2014}.
Students complained that a ranking did not allow them to express the
difference between the cases of submissions of nearly equal, and
vastly different, quality; no amount of references to the body of
literature on rank aggregation seemed to lessen their intuitive
distrust in the mechanism, and having to explain how accurate ranks
can indeed be obtained from many partial ranks became a burden for the
instructor at the beginning of every class.
Further, students disliked the task of ranking the work of their
peers.

At some point the evolution of CrowdRanker, we switched to grades, but
we required that the floating point grades assigned by each student to
the submissions reviewed be all different.
This allowed us to reconstruct the underlying ranking.
The inability of giving the same grade to two different submissions
was by a wide margin the most common complaint with CrowdRanker,
notwithstanding that in principle, the probability that two different
submissions are exactly of the same quality is zero.
Students liked to group submissions they reviewed into mental ``quality
bins'', and were not eager to resolve the issue of what was the
precise quality order in each bin.
Eventually, we removed the restriction on grades being different, we
renamed the tool CrowdGrader, and we based it on grades rather than
rankings; the tool gained much wider acceptance.
As our goal was to develop a widely used and accepted tool, we have
been using grades ever since.

As we mentioned in the introduction, we do not wish to make a general
claim on the basis of our particular experience; the greater
acceptance of grades compared to rankings might very
well have lied in the implementation or user interface of CrowdRanker,
or in the type of classroom use to which we tried to apply it.
Nevertheless, grading mechanisms are very commonly used, making the
question of how to devise incentive schemes that make them precise a
relevant one.

\subsection{Admissible Grading Strategies}
\label{subseq-grading-strategies}

We denote the set of students and submissions by $U$ and $I$ respectively.
Each submission $i\in I$ has a true quality $q_i\in [0, M]$ where $M$ is the maximum of the grading range.
Students evaluate submissions by assigning numerical grades: we denote by $g_{iu} \in[0, M]$ the grade assigned by user $u \in U$ to submission $i \in I$.
Each reviewer grades only a subset of submissions.
The grades can be represented as a labeled bipartite graph $G=(U\times I, E)$, where $(i, u, g_{iu}) \in E$ if $u$ reviewed $i$, assigning grade $g_{iu}$ to it.We denote the set of submissions that are graded by user $u$ is $\partial u$, and conversely, we denote by $\partial i$ the set of users that graded submission $i$.

We assume that the grading system anonymizes submissions, as it is
commonly done to avoid students grading their friends in a special
way.
Further, we assume that the grade that a student assigns to a
submission can depend on the individual submission only through the
quality of the submission.
In other words, students can distinguish submissions only through
their quality.

To make this assumption precise, we define the set of {\em admissible
  grading strategies\/} as follows, and we restrict our attention to
students following admissible strategies.
In an admissible strategy, students grade a submission $i$ in two steps.
First, they estimate the true quality of $i$, obtaining
$q_i + \varepsilon$, where $\varepsilon$ is a random measurement error
whose distribution does not depend on $i$.
The student then assigns grade $g_{iu} = f(q_i + \varepsilon) + \xi$,
where $f:[0, M] \to [0, M]$ is a grade modification function, and
$\xi$ is additional noise added intentionally by the student; again,
neither $f$ nor the distribution of $\xi$ can depend on $i$ directly.
The function $f$ models the conscious intention of the student to report a grade that does not correspond to the truth, and the additional noise represents intentional randomization on the part of the user.
An admissible grading strategy $\pi$ is defined by a tuple of $\pi=(f,
e, v)$ where $f$ is as above, and $e$ and $v$ are expectation and
standard deviation of the voluntary noise $\xi$.
We denote by $\mathcal{A}$ the set of all admissible strategies.
Obviously, if a student plays a strategy with constant function $f$, the student does not need to measure the quality of the submission being graded.

An example of a non-admissible strategy is one in which, given a
submission $i$, students compute a hash function that maps the content
of the submission to a grade in $[0, M]$.
Assuming that students follow admissible grading strategies is a
strong assumption from a mathematical point of view, and it rules out
some strategies, such as the above, where students collude to appear
to be in perfect agreement on the quality of each submission.
On the other hand, it is highly implausible that students would agree
to a scheme that arbitrarily gives higher grade to some of their
submissions, and lower to others.
Such scheme would require communication and coordination ahead of time
between the students.
The students who would be arbitrarily disadvantaged by the scheme,
such as those in the example above whom the hash function assigns
grade $0$, would object to its adoption.
If students were to pre-agree on a scheme that assigned different grades to their submissions, it is implausible that they
would agree on any scheme that depends on aspects of the submission other
than the quality.
Indeed, such collusion has never been observed in CrowdGrader, nor
reported in any of the other peer grading systems.
Thus, we believe that restricting our attention to the game equilibria
determined by admissible strategies is not restrictive from a
pragmatic point of view.

\subsection{Grading Strategies and Incentives}

To provide an incentive towards accurate grading, we propose that
students who participate in the peer-grading system receive a grade
that consists in two components:
\begin{itemize}
\item A {\em submission grade,} that captures the quality of the student's own submission.
This grade is computed by combining the grades provided for the submission by the peer graders into a single consensus grade.
\item A {\em review grade,} capturing the accuracy of the grades assigned by the student with respect to the consensus grades.
\end{itemize}
We propose to make the review grade inversely and linearly
proportional to a {\em loss function\/} that evaluates the imprecision
of a student.

In this paper, we study grading strategies in the framework of game
theory, considering whether certain strategies form Nash equilibria,
whether certain strategies are best responses to adversary strategies,
and so on \cite{OsborneRubinstein}.
The notion of Nash equilibrium, and several other notions we rely upon, can be stated in terms of strategies that are the {\em best response\/} to strategies played by the other participants in the game, which in our case are the other students.
At first sight, it would seem that we need to consider both the submission and review components of a student's grade in order to reason about best responses, but this is not the case.
Since students are never assigned their own submissions to grade,
students cannot modify their review grades by playing different review strategies.
In order to reason about best responses, and Nash equilibria, we can
thus focus on the review grade only, and thus, on the loss functions
used to compute it.

We denote by $l(u, G)$ the loss of user $u$ in the graph $G$ of
reviews.
In the remainder of the paper, we study the properties of various loss
functions.
A simple example of loss function consists in measuring the average
square difference between the student's grade for a submission, and the
average grade received by the submission:
\begin{align}
    \label{eq-square-loss}
    l_2(u, G) = \frac{1}{|\partial u|} \sum_{i \in \partial u} \left(g_{iu} - \frac{1}{|\partial i|}\sum_{v\in\partial i}g_{iv}\right)^2 \eqpun .
\end{align}
To evaluate a strategy $\pi$, we compute the expected loss of a
student $u$ who plays according to $\pi$.
We distinguish two types of strategy losses, one with respect to a specified set of submissions, and one that averages over all submissions.
The first type of loss is the expectation of $l(u, G)$ at  instances $\partial u$ that has been graded by the reviewer.
We keep submissions and strategies fixed, but we take expectation over errors ($\varepsilon$ and $\xi$) of all reviewers involved in evaluating $\partial u$.
We denote such loss as $l(\pi_u, \bpi_{-u}, \{q_i\})$ where
$\bpi_{-u}$ is the vector of strategies by other reviewers, and
$\{q_i\}$ is the set of true qualities of the submissions graded by the reviewer.

The second type of strategy loss is the expectation of $l_{2}(u, G)$, with the expectation  taken over all errors ($\varepsilon$ and $\xi$) {\em and} a distribution of submission qualities.
We denote such loss as $L(\pi_u, \bpi_{-u})$.

Our goal will be to design loss functions that create an incentive for students to play the truthful strategy.
We call a strategy {\em $\sigma$-truthful \/} if it outputs a true grade with the average square error smaller than
$\sigma^2$.

\begin{definition}
    A strategy $\pi\in\mathcal{A}$ is $\sigma$-truthful if for every $q\in[0,M]$
    \begin{align*}
        \E{(g_\pi(q) -q)^2} \le \sigma^2 \eqpun .
    \end{align*}
    \label{def-sigma-truth}
\end{definition}

The square error of any strategy can be written as a sum of two components:
a variance and a squared bias.
Indeed, denoting with $g = g_\pi(q)$ for brevity, we have:
\begin{equation}
    \E{(g - q)^2} =\underbrace{\E{(g-\E{g})^2}}_{\text{\normalsize variance}} +
    \underbrace{(\E{g}  - q)^2}_{\text{\normalsize squared bias}}
    \label{eq-error-representation}
\end{equation}
Thus a strategy is a $\sigma$-truthful strategy if for every submission quality $q\in[0,M]$ the next condition holds:
\begin{equation}
\label{ineq-truth-error}
b^2 + v^2 \le \sigma^2
\end{equation}
where $b$ and $v$ are the bias and standard deviation of the grade $g_\pi(q)=f(q+\varepsilon) + \xi$.

We say that a loss function creates an incentive for students to grade truthfully if the best Nash equilibrium is $\sigma$-truthful.
Throughout the paper we will be able to prove stronger results from which it will follow that the best Nash equilibrium is $\sigma$-truthful.

\section{Supervised Grading}

In the supervised approach, the instructor grades a subset of the submissions, and the information thus obtained is used, along with the student-provided grades, to compute the review grade of every student.
We present two approaches to supervised grading.
The first is a one-level approach, in which the review grade of students is computed by comparing student grades preferentially with instructor grades, when those are available.
The one-level approach is simple to implement, and can scale to class size of a hundred or a few hundred students, while requiring only moderate amount of instructor work.
The second approach is a hierarchical one, in which we organize the review assignment in a hierarchy that allows us to construct a review incentive that scales to arbitrarily large classes, with bounded (in fact, constant) amount of instructor work.

\subsection{One Level Approach}
\label{sec-supervised-flat}

In the one-level approach, the instructor randomly chooses a subset of
submissions to grade.
If a student has one of the submissions, or more, graded by the instructor, the student's loss is determined by comparing the student grade(s) with the instructor's, rather than with those provided by other students.
Without loss of generality, we can discuss the situation for a student doing a single review; the analysis for the case of multiple reviews follows simply by taking expectations, so that the incentives are unchanged.
Assuming (as we do throughout this section) that the instructor is able to discern the true quality of a submission, if the submission $i\in\partial u$ is graded by the instructor, the loss of user $u$ is $(g_{iu} - q_i)^2$.
Otherwise, the loss is measured using $l_{2}(u, G)$ loss (\ref{eq-square-loss}).
Let $p$ be the probability of a submission being reviewed by the instructor.
The expected loss of a reviewer $u$ is
\begin{equation}
    l_{\flatLossName}(u, G, p) = (1-p)\alpha\Bigl(g_{iu} -
    \frac{1}{|\partial i|-1}\sum_{v\in \partial i\backslash
      u}g_{iv}\Bigr)^2 \nonumber
    + p\alpha(g_{iu} - q_i)^2 \eqpun .   \label{eq-expected-squared-loss}
\end{equation}
where $\alpha > 0$ is a scaling coefficient.
By choosing probability $p$ the instructor varies the influence on
reviewers: the higher $p$ is, the more likely that the review grade of
a student depends on a comparison with the instructor rather than on a
comparison with other students.

The instructor can influence the review behavior of students because
students are interested in receiving a higher review grade.
Thus, we are implicitly assuming that every student has a utility
function that measures the value of receiving a high review grade.
For simplicity, we assume here that the utility $r$ a student receives
from the review grade is simply the opposite of the reviewing loss
$l$.
On the other hand, reviewing a submission to evaluate its quality
takes time and effort, which corresponds to a cost $C > 0$.
The user thus has a choice:
\begin{itemize}
\item either review the submission, and receive utility $- l - C$,
  where $l$ is computed according to (\ref{eq-expected-squared-loss}),
\item or play the ``lazy'' strategy, ignore the content of the
  submission, and assing the submission a constant grade plus random
  amount, and receive utility $-l'$, where $l'$ is the loss for the
  grade assigned.
\end{itemize}
The first strategy is clearly the one we intend to encourage.
We note that, if the student does not examine the content of the
paper, the only admissible strategy consists in playing a constant
plus random noise (see Section~\ref{subseq-grading-strategies}).
When students have a positive review cost $C > 0$, the value of the
instructor review probability $p$ determines the balance between
review loss, and review cost.
Our goal is to provide a lower bound for $p$ that ensures that all
Nash equilibria strategies are $\sigma$-truthful strategies.
To prove the result on the lower bound, we first state and prove two
lemmas.
The first lemma states that if two strategies have the same bias on
submission $i$, then the strategy that has the smaller variance also
has the smaller loss (\ref{eq-expected-squared-loss}).

\begin{lemma}
  \label{lemma-smaller-var-better-strategey}
  Let strategies $\pi_1$ and $\pi_2$ have the same expected grade of submission $i\in I$, or
  \begin{align*}
    \E{g_{\pi_1}(q_i)} = \E{g_{\pi_2}(q_i)} \eqpun .
  \end{align*}
  Strategy $\pi_1$ has smaller expected loss (\ref{eq-expected-squared-loss}) computed on submission $i$ than strategy $\pi_2$
  if the variance of $\pi_1$ is smaller than the variance of $\pi_2$
  \begin{align*}
    \E{(g_{\pi_1}(q_i) - \E{g_{\pi_1}(q_i)})^2} < E{(g_{\pi_2}(q_i) - \E{g_{\pi_2}(q_i)})^2}
     \eqpun .
  \end{align*}
\end{lemma}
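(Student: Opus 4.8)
The plan is to apply the bias–variance decomposition~(\ref{eq-error-representation}) to each of the two terms of the loss~(\ref{eq-expected-squared-loss}), and then observe that, once the expected grade is fixed, both terms differ between $\pi_1$ and $\pi_2$ only through the variance of $u$'s grade. Throughout, I take the expectation of~(\ref{eq-expected-squared-loss}) over the measurement and voluntary noise of all reviewers involved. Write $g = g_\pi(q_i)$ for the random grade that $u$ assigns to $i$ under a strategy $\pi$, let $\mu = \E{g}$ denote its mean (common to $\pi_1$ and $\pi_2$ by hypothesis), and let $s_\pi^2 = \E{(g-\mu)^2}$ denote its variance. For brevity I abbreviate the consensus of the other reviewers by $A = \frac{1}{|\partial i|-1}\sum_{v\in\partial i\backslash u} g_{iv}$.

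First I would handle the instructor term $\E{(g - q_i)^2}$. By~(\ref{eq-error-representation}) this equals $s_\pi^2 + (\mu - q_i)^2$; since neither $\mu$ nor $q_i$ depends on whether $u$ plays $\pi_1$ or $\pi_2$, this term equals $s_\pi^2$ plus a strategy-independent constant.

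The key step is the peer term $\E{(g - A)^2}$. Here I would use that, under admissible strategies, $u$'s grade depends only on $u$'s own measurement error and voluntary noise, whereas $A$ depends only on the errors of the reviewers in $\partial i\backslash u$; since the other reviewers' strategies $\bpi_{-u}$ are held fixed, $g$ and $A$ are independent and $A$ has a distribution that is unaffected by $u$'s choice between $\pi_1$ and $\pi_2$. Expanding and using independence, $\E{(g-A)^2} = \E{g^2} - 2\,\E{g}\,\E{A} + \E{A^2}$, and substituting $\E{g^2} = s_\pi^2 + \mu^2$ gives $\E{(g-A)^2} = s_\pi^2 + \bigl(\mu^2 - 2\mu\,\E{A} + \E{A^2}\bigr)$, where the parenthesized remainder again depends only on $\mu$ and on the fixed distribution of $A$, hence is constant across the two strategies.

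Combining the two terms, the expected loss~(\ref{eq-expected-squared-loss}) reduces to $\alpha\, s_\pi^2 + \text{const}$, with the additive constant independent of the choice between $\pi_1$ and $\pi_2$. Since $\alpha > 0$ and $s_{\pi_1}^2 < s_{\pi_2}^2$ by hypothesis, the loss under $\pi_1$ is strictly smaller, which is the claim. I expect the only delicate point to be justifying the independence of $g$ and $A$, and in particular that the distribution of $A$ is unaffected by $u$'s strategy; this rests squarely on the admissibility assumption of Section~\ref{subseq-grading-strategies}, namely that a student's grade enters the system only through that student's own reviews, together with the fact that students never review their own submissions.
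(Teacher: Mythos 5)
Your proposal is correct and follows essentially the same route as the paper: both decompose the expected loss via the bias--variance identity and observe that, once the expected grade $\mu$ is fixed, every term except $\alpha\,s_\pi^2$ is common to the two strategies (the paper packages the peer term through its appendix Lemma~\ref{lem:1} with the consensus average playing the role of the constant $a$, which amounts to the same independence-based expansion you carry out explicitly). No gaps.
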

\begin{proof}
  We apply Lemma~\ref{lem:1} of Appendix to represent loss (\ref{eq-expected-squared-loss}) as a sum of variance and bias terms.
  In the context of the lemma, $a=\frac{1}{|\partial i|-1}\sum_{v\in \partial i\backslash u}g_{iv},\  b=q_i$.
  Expectations are taken with respect to strategies errors.
  The expected loss of strategy $\pi$ on submission $i$ is
  \begin{align}
  \label{lem-smaller-var-better-strategy:eq-1}
    l_{\flatLossName}({\pi}, G) = (1-p)\alpha&\E{\left(g_{\pi}(q_i) - \E{g_{\pi}(q_i)}\right)^2} \\
  \label{lem-smaller-var-better-strategy:eq-2}
                +(1-p)\alpha&\left(\E{g_{\pi}(q_i) - q_i}\right)^2 \\
  \label{lem-smaller-var-better-strategy:eq-3}
                -2(1-p)\alpha&(\E{g_{\pi}(q_i)} - q_i) \left(\frac{1}{|\partial i|-1}\sum_{v\in \partial i\backslash u}g_{iv} - q_i \right) \\
  \label{lem-smaller-var-better-strategy:eq-4}
                +(1-p)\alpha&\left(\frac{1}{|\partial i|-1}\sum_{v\in \partial i\backslash u}g_{iv} - q_i \right)^2 \\
  \label{lem-smaller-var-better-strategy:eq-5}
                +p\alpha&\E{(g_{\pi}(q_i) - \E{g_{\pi}(q_i)})^2} \\
  \label{lem-smaller-var-better-strategy:eq-6}
                +p\alpha&(\E{g_{\pi}(q_i)} - q_i)^2 \eqpun .
  \end{align}
Summands (\ref{lem-smaller-var-better-strategy:eq-1}) and (\ref{lem-smaller-var-better-strategy:eq-5}) add up to
the variance of strategy $\pi$ on submission $i$ times $\alpha$.
Summands (\ref{lem-smaller-var-better-strategy:eq-2}), (\ref{lem-smaller-var-better-strategy:eq-3}), (\ref{lem-smaller-var-better-strategy:eq-6}) depend on the bias of $\pi$.
Summand (\ref{lem-smaller-var-better-strategy:eq-4}) does not depend on strategy $\pi$.
If two strategies $\pi_1$ and $\pi_2$ have the same bias $\E{g_{\pi_1}(q_i)} -q_i = \E{g_{\pi_2}(q_i)} - q_i$, then
all summands but (\ref{lem-smaller-var-better-strategy:eq-1}),(\ref{lem-smaller-var-better-strategy:eq-5}) are the
same for both strategies. Thus, the strategy that has the smaller variance has the smaller loss.
\end{proof}

The next lemma focuses on strategies that assign grades with
0~variance.
The lemma shows that strategies that assign submissions a fixed
grade too far from the true quality cannot be best-response strategies.

\begin{lemma}
  \label{lem-p-for-const-strategies}
  Consider a submission $i$.
  Let every reviewer $u \in \partial i$ play according to a strategy $\pi$ that
  assigns $i$ a fixed grade $D_i\in[0, M]$, with $D_i \neq q_i$.
  A reviewer $u \in \partial i$ has an incentive to perform the review
  of $i$ (paying cost $C$) and deviate from $\pi$ if:
  \begin{align}
    \label{lem-p-for-const-strategeis:ineq}
    p > \sqrt{\frac{C}{\alpha(D_i - q_i)^2}} \eqpun .
  \end{align}
\end{lemma}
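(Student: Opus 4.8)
The plan is to compare the utility reviewer $u$ obtains by staying with the constant strategy $\pi$ against the utility obtained by paying the review cost $C$ and switching to the best grade that the review makes available. First I would evaluate the loss of staying. Since every reviewer in $\partial i$ reports the fixed value $D_i$, the peer average $\frac{1}{|\partial i|-1}\sum_{v\in\partial i\setminus u} g_{iv}$ equals $D_i$ exactly, so a reviewer who also reports $D_i$ has peer-comparison term $0$ in \eqref{eq-expected-squared-loss}. The loss of staying with $\pi$ therefore reduces to the instructor-comparison term, $l_{\text{lazy}} = p\alpha(D_i - q_i)^2$, and the corresponding utility is $-p\alpha(D_i-q_i)^2$: there is no review cost, since a constant strategy requires no examination of the submission.

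Next I would determine the most profitable deviation available once $u$ has reviewed $i$ and thus learned $q_i$. With the other reviewers still pinned at $D_i$, the expected loss of reporting an arbitrary grade $g$ is the convex quadratic $(1-p)\alpha(g-D_i)^2 + p\alpha(g-q_i)^2$. Minimizing over $g$, either directly or via the bias--variance decomposition of Lemma~\ref{lem:1} (a fixed report has zero variance), yields the optimal grade $g^\star = (1-p)D_i + p q_i$, a convex combination of the colluding value and the truth. Substituting $g^\star - D_i = p(q_i - D_i)$ and $g^\star - q_i = (1-p)(D_i - q_i)$ collapses the minimal review loss to $l_{\text{rev}} = \alpha p(1-p)(D_i - q_i)^2$.

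Finally I would assemble the incentive comparison. The reviewer prefers to review and deviate precisely when $-l_{\text{rev}} - C > -l_{\text{lazy}}$, i.e. when $l_{\text{lazy}} - l_{\text{rev}} > C$. The left-hand side telescopes, $p\alpha(D_i-q_i)^2 - \alpha p(1-p)(D_i-q_i)^2 = \alpha p^2 (D_i-q_i)^2$, so the condition becomes $\alpha p^2(D_i - q_i)^2 > C$, which is equivalent to \eqref{lem-p-for-const-strategeis:ineq} upon taking the positive square root. This establishes the claim.

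The step I expect to be the crux is identifying the optimal deviating grade $g^\star$. The naive guess that the reviewer should report the truth $q_i$ gives the weaker and incorrect threshold $p > \tfrac12 + C/\bigl(2\alpha(D_i-q_i)^2\bigr)$; the clean bound emerges only because the reviewer optimally hedges between the peer baseline $D_i$ and the true quality $q_i$, which is exactly what the convex minimizer captures. A secondary point to address is measurement error: strictly, the reviewer observes $q_i + \varepsilon$ rather than $q_i$, which inflates $l_{\text{rev}}$ by a variance term of order $\alpha p^2\,\mathrm{Var}(\varepsilon)$. I would handle this either by arguing in the idealized regime where a diligent review reveals $q_i$ (consistent with this section's assumption that true quality is discernible), or by noting that the deviation to $g^\star$ stays strictly profitable so long as the margin $\alpha p^2(D_i-q_i)^2 - C$ dominates that extra variance.
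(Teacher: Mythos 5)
Your proposal is correct and follows essentially the same route as the paper's proof: compute the loss $\alpha p(D_i-q_i)^2$ of staying with the constant strategy, find the optimal deviation $g^\star=(1-p)D_i+pq_i$ by minimizing the quadratic loss, obtain the deviating loss $\alpha p(1-p)(D_i-q_i)^2$, and compare the difference $\alpha p^2(D_i-q_i)^2$ against $C$. Your closing remarks on the measurement-error term and on why the naive truthful deviation gives the wrong threshold go slightly beyond what the paper writes, but the core argument is identical.
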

\begin{proof}
 From (\ref{eq-expected-squared-loss}), the loss $h_1$ of strategy $\pi$ is
 \begin{align*}
  h_1 = \alpha p(D_i-q_i)^2 \eqpun .
 \end{align*}
 If user $u$ modifies the grade she assigns to $i$, the optimal grade
 can be found my minimizing ($\ref{eq-expected-squared-loss}$), and
 is given by:
 \begin{align}
   \label{eq-best-resp-grade}
   g_u(q_i) = (1-p)D_i + pq_i \eqpun .
 \end{align}
 The loss $h_2$ of this best response is the loss (\ref{eq-expected-squared-loss})
 at grade $ g_u(q_i)$:
 \begin{align*}
  h_2 = \alpha p(1-p)(D_i- q_i)^2 \eqpun .
 \end{align*}
 The condition $h_2 + C < h_1$ yields
 \begin{align*}
  \alpha(1-p)&(D_i- q_i)^2 + C < \alpha(D_i-q_i)^2 \\
  &p^2 > \frac{C}{\alpha(D_i-q_i)^2} \eqpun ,
 \end{align*}
 yielding the desired result.
\end{proof}

Note that if the true grades $q_i$ and guessed grades $D_i$ are close,
that is, if most submissions have similar quality and the reviewers
can easily guess it, then the reviewers have little incentive to
actually perform the reviews.
This is indeed what we observed in practice.
Using these two lemmas, we can finally provide the desired lower bound
for the instructor review probability that ensures a desired level of
review accuracy.

\begin{theorem}
\label{th-one-level-supervised}
If probability $p$ satisfies inequality
\begin{align}
    \label{ineq-low-bond-p}
     p>\sqrt{\frac{C}{\alpha\sigma^2}}
\end{align}
then all Nash equilibrium strategies belong to the set of $\sigma$-truthful strategies.
\end{theorem}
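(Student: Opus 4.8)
The plan is to argue the contrapositive: I would show that any strategy profile that is not $\sigma$-truthful cannot be a Nash equilibrium once $p$ satisfies (\ref{ineq-low-bond-p}). The two preceding lemmas are exactly the tools for this. Lemma~\ref{lemma-smaller-var-better-strategey} lets me strip away variance at no benefit to the student, and Lemma~\ref{lem-p-for-const-strategies} lets me rule out deterministic strategies that are too biased to survive the instructor's spot checks.

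First I would reduce to deterministic (zero-variance) strategies. If a student plays a strategy whose grade for some submission $i$ has positive variance, then by Lemma~\ref{lemma-smaller-var-better-strategey} the student can switch to a strategy with the same expected grade but strictly smaller (in fact zero) variance and hence strictly lower loss (\ref{eq-expected-squared-loss}), at no additional review cost. Concretely, I would replace the voluntary noise $\xi$ by its mean and collapse the measurement-induced spread, producing an admissible strategy that outputs the deterministic grade $\E{g_\pi(q_i)}$. Consequently, in any Nash equilibrium every student's grade on every submission is deterministic, and the equilibrium strategy is described at each quality $q$ by a single value $g_\pi(q)$ with bias $b = g_\pi(q) - q$ and variance $v = 0$.

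Next I would apply the characterization of $\sigma$-truthfulness in (\ref{ineq-truth-error}). Since $v = 0$, a deterministic equilibrium strategy fails to be $\sigma$-truthful precisely when some submission $i$ has $(g_\pi(q_i) - q_i)^2 = b^2 > \sigma^2$. Writing $D_i = g_\pi(q_i)$, this means every reviewer of $i$ assigns the fixed grade $D_i$ with $(D_i - q_i)^2 > \sigma^2$. I then invoke Lemma~\ref{lem-p-for-const-strategies}: its threshold $\sqrt{C/(\alpha(D_i-q_i)^2)}$ is decreasing in $(D_i-q_i)^2$, so $(D_i-q_i)^2 > \sigma^2$ gives $\sqrt{C/(\alpha(D_i-q_i)^2)} < \sqrt{C/(\alpha\sigma^2)}$. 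Combining this with the hypothesis (\ref{ineq-low-bond-p}) that $p > \sqrt{C/(\alpha\sigma^2)}$, the deviation condition (\ref{lem-p-for-const-strategeis:ineq}) of Lemma~\ref{lem-p-for-const-strategies} is satisfied, so reviewer $u$ has a strictly profitable deviation, contradicting the assumption that the profile is a Nash equilibrium. Hence every Nash equilibrium strategy must be $\sigma$-truthful.

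I expect the main obstacle to be the first (variance-elimination) step rather than the final chain of inequalities. Justifying that a same-bias, zero-variance deviation is itself admissible in the sense of Section~\ref{subseq-grading-strategies} requires care, since the student controls the voluntary noise $\xi$ directly but the measurement error $\varepsilon$ only through the choice of $f$; I would argue that outputting the deterministic value $\E{f(q+\varepsilon)}$ corresponds to a legitimate grade modification function with zero voluntary noise, so that no review cost is incurred and the bias is preserved. A secondary point needing attention is that Lemma~\ref{lem-p-for-const-strategies} assumes all reviewers of $i$ play the common constant $D_i$; this holds for the symmetric equilibria under consideration, where the shared value $D_i$ makes the peer-consensus term in (\ref{eq-expected-squared-loss}) vanish and reduces the loss to $\alpha p (D_i - q_i)^2$, exactly as used in that lemma's proof.
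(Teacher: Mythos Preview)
Your proposal is correct and follows essentially the same approach as the paper: reduce to zero-variance strategies via Lemma~\ref{lemma-smaller-var-better-strategey}, then use Lemma~\ref{lem-p-for-const-strategies} together with the monotonicity of its threshold in $(D_i-q_i)^2$ to rule out any equilibrium with bias exceeding $\sigma$. Your write-up is in fact more careful than the paper's about the monotonicity step and about the admissibility and symmetry caveats, but the logical skeleton is identical.
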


\begin{proof} According to
Lemma~\ref{lemma-smaller-var-better-strategey} we can limit our
attention to strategies that have 0 variance on each submission $i\in
I$.
Indeed, any strategy $\pi$ that is not constant on submissions $I$ is
dominated by strategy $\pi'$ that grades with $g_{\pi'}(q_i) =
\E{g_{\pi}(q_i)}, i\in I$, where the expectation is taken over errors
of strategy $\pi$.

According to Lemma~\ref{lem-p-for-const-strategies}, if $p$ satisfies
inequality (\ref{lem-p-for-const-strategeis:ineq}) then there is an
incentive for a user $u$ to deviate on submission $i$ from strategy
that grades submission $i$ with constant $D_i$.
In particular, if $p$ satisfies inequality (\ref{ineq-low-bond-p})
then the user has an incentive to deviate from any strategy $\pi$ such
that that $|g_{\pi}(q_i)-q_i| \ge \sigma$.
Thus, if a strategy is not $\sigma$-truthful and $p$ satisfies
inequality (\ref{ineq-low-bond-p}) then the strategy cannot be a Nash
equilibrium.
\end{proof}

What follows is an immediate corollary of Theorem~\ref{th-one-level-supervised} by setting the cost of
performing a review to 0.

\begin{corollary} \label{theo-nash-simple}
If the instructor reviews each submission with strictly positive
probability and the cost of reviewing is not included in the utility function, then all the Nash equilibria strategies belong to the set
of $\sigma$-truthful strategies.
\end{corollary}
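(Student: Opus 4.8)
The plan is to obtain this statement as a limiting case of Theorem~\ref{th-one-level-supervised}, since the corollary simply drops the review cost $C$ from the utility function and asks only that the instructor review probability be strictly positive. The key observation is that Theorem~\ref{th-one-level-supervised} gives the lower bound $p > \sqrt{C/(\alpha\sigma^2)}$ on the instructor probability needed to force all Nash equilibria to be $\sigma$-truthful, and this bound depends on $C$ only through the numerator. When the reviewing cost is removed from the utility, we are effectively in the regime $C = 0$, so the right-hand side of inequality~(\ref{ineq-low-bond-p}) becomes $\sqrt{0} = 0$. Hence any strictly positive $p$ satisfies the hypothesis of the theorem, and the conclusion follows immediately.

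First I would re-examine the proof of Lemma~\ref{lem-p-for-const-strategies} with $C = 0$. The deviation condition derived there is $h_2 + C < h_1$, which with $C=0$ reads $\alpha p(1-p)(D_i-q_i)^2 < \alpha p (D_i - q_i)^2$. Since $D_i \neq q_i$ and $p > 0$, this reduces to $(1-p) < 1$, i.e. $p > 0$, which holds by assumption. Thus with zero cost, a reviewer always has a strict incentive to deviate from any fixed grade $D_i \neq q_i$ on a submission reviewed with positive instructor probability; there is no threshold to clear beyond positivity of $p$.

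Next I would reproduce the logic of Theorem~\ref{th-one-level-supervised} verbatim. By Lemma~\ref{lemma-smaller-var-better-strategey}, any strategy with nonzero variance on some submission is dominated by the zero-variance strategy with the same expected grade, so we may restrict attention to strategies that assign each submission a fixed grade $D_i = \E{g_{\pi}(q_i)}$. By the zero-cost version of Lemma~\ref{lem-p-for-const-strategies}, whenever $D_i \neq q_i$ and $p>0$, the reviewer strictly prefers to deviate toward the best-response grade~(\ref{eq-best-resp-grade}). Consequently, any strategy that is not $\sigma$-truthful---indeed, any strategy for which $\E{g_{\pi}(q_i)} \neq q_i$ for some $i$, so in particular any strategy with $|\E{g_\pi(q_i)} - q_i| \ge \sigma$---admits a profitable unilateral deviation and therefore cannot be a Nash equilibrium.

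The only subtlety, and the step I would treat most carefully, is the zero-cost boundary behavior. With $C>0$ the theorem leaves a genuine margin $D_i - q_i$ must exceed before deviation pays off, which is exactly the threshold $\sigma$ that defines $\sigma$-truthfulness; with $C=0$ this margin collapses and the incentive to deviate is present for \emph{every} biased fixed grade, however small the bias. One should confirm that this does not overshoot the claim: the corollary only asserts that all Nash equilibria lie in the set of $\sigma$-truthful strategies, and since zero bias trivially satisfies the $\sigma$-truthful condition $b^2 + v^2 \le \sigma^2$ (with $b=v=0$), the stronger fact that only truly unbiased, zero-variance strategies survive is consistent with, and implies, the stated conclusion. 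I do not expect any real obstacle here; the result is a clean specialization of the theorem, and the proof is essentially a one-line instantiation at $C=0$ with a brief verification that the deviation inequality degenerates correctly.
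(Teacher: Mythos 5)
Your proposal is correct and follows essentially the same route as the paper, which simply observes that the corollary is immediate from Theorem~\ref{th-one-level-supervised} by setting $C=0$, so that the bound~(\ref{ineq-low-bond-p}) degenerates to $p>0$. Your additional verification that Lemma~\ref{lem-p-for-const-strategies} behaves correctly at the zero-cost boundary is a harmless (and slightly more careful) elaboration of the same argument.
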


{\em Example.}
We provide an application of  bound (\ref{ineq-low-bond-p}) to a
classroom setting that is typical of how CrowdGrader is used.
Submissions are graded in the interval from 0 to 10, and the final
grade is determined as the weighed average of the submission grade,
and of the review grade; the submission grade carries 75\% weight, and
the review grade 25\%.
We assume that it takes 5 hours for a student to obtain a basic
version of the homework submission (i.e., before 5 hours, students do
not have a solution they can realistically submit).
After these 5 hours, the additional benefit of spending more time on
the homework is 1 grade point per additional hour.
Each student is asked to review 5 submissions.

We assume that students must budget the total time they devote to each
class.
Any extra time $x$ can be spent either improving the homework, or
doing the reviews.
Thus, the cost $C$ of working on a review for an amount of time
$x$ can be measured as the loss of utility incurred by not using time
$x$ to work on the homework instead.
An amount $x$ hours spent on the homework is valued $3/4 \cdot x$ (due to the
75\% weight and 1 point/hour), so we let $C = 3x / 4$.
The scaling coefficient $\alpha$ for (eq-expected-squared-loss)
is $\alpha = 1/4$, which reflects also the 25\% weighing of the review
grade (of course, the decision of $\alpha$ is independent, and we
could choose a larger $\alpha$ to penalize more strongly imprecise
students, but too large a value of $\alpha$ leads to unhappy
students).
In order for the instructor to encourage $\sigma$-truthful strategies
with $\sigma=1$, the lower bound for $p$ is:
\begin{align}
  \label{ineq-lower-bound-p-example}
  p > \sqrt{\frac{C}{\alpha \sigma^2}} = \sqrt{3x} \eqpun .
\end{align}

Figure~\ref{fig-prob-vs-cost} depicts the lower bound (\ref{ineq-lower-bound-p-example}) as a function of time required to do a review.
For $x=1/12$ hours, or~5 minutes, the probability $p$ of being reviewer by the instructor should be at least 0.5.

Let us estimate the instructor's workload that ensures that $p$ is at least 0.5.
Let $N$, $m$, $k$ be the class size, the number of submissions per reviewer and the number of submissions for the instructor respectively.
To compute $p$ as a function of $N,m$, and $k$, we note that $p = 1 - q$ where $q$ is the probability that the instructor and a reviewer do not have submissions in common.
The probability $q$ is the fraction of the number of ways of successful submission assignment and the total number of ways of assigning submissions
\begin{align*}
 q = \frac{\binom{N}{m}\binom{N-m}{k}}{\binom{N}{m}\binom{N}{k}}
 = \frac{\binom{N-m}{k}}{\binom{N}{k}} \eqpun .
\end{align*}
If we fix $p$, we can estimate the instructor's workload depending on the class size.
When there are 100 students in a class and it takes 5 minutes to grade
a submission, then $p$ is 0.5 and the instructor needs to grade at
least 13 submissions.

The dependency of $k$ on $N$ is roughly linear, indicating that the
instructor workload increases linearly with class size.

In this example we assume that the instructor chooses which
submissions to review uniformy at random.
The instructor could also pick submissions to review trying to
maximize the number of reviewers with whom there is a reviewed
submission in common, but in general, this requires solving {\em
  vertex cover,} an NP-hard optimization problem \cite{karp1972reducibility}.
Furthermore, the size of the resulting cover would still scale
linearly with class size.
In the next section, we present hierarchical review schemes that can
scale to any classroom size while requiring only a constant amount of
work from the instructor.

\begin{figure}[!ht]
\centering
\includegraphics[width=0.75\linewidth]{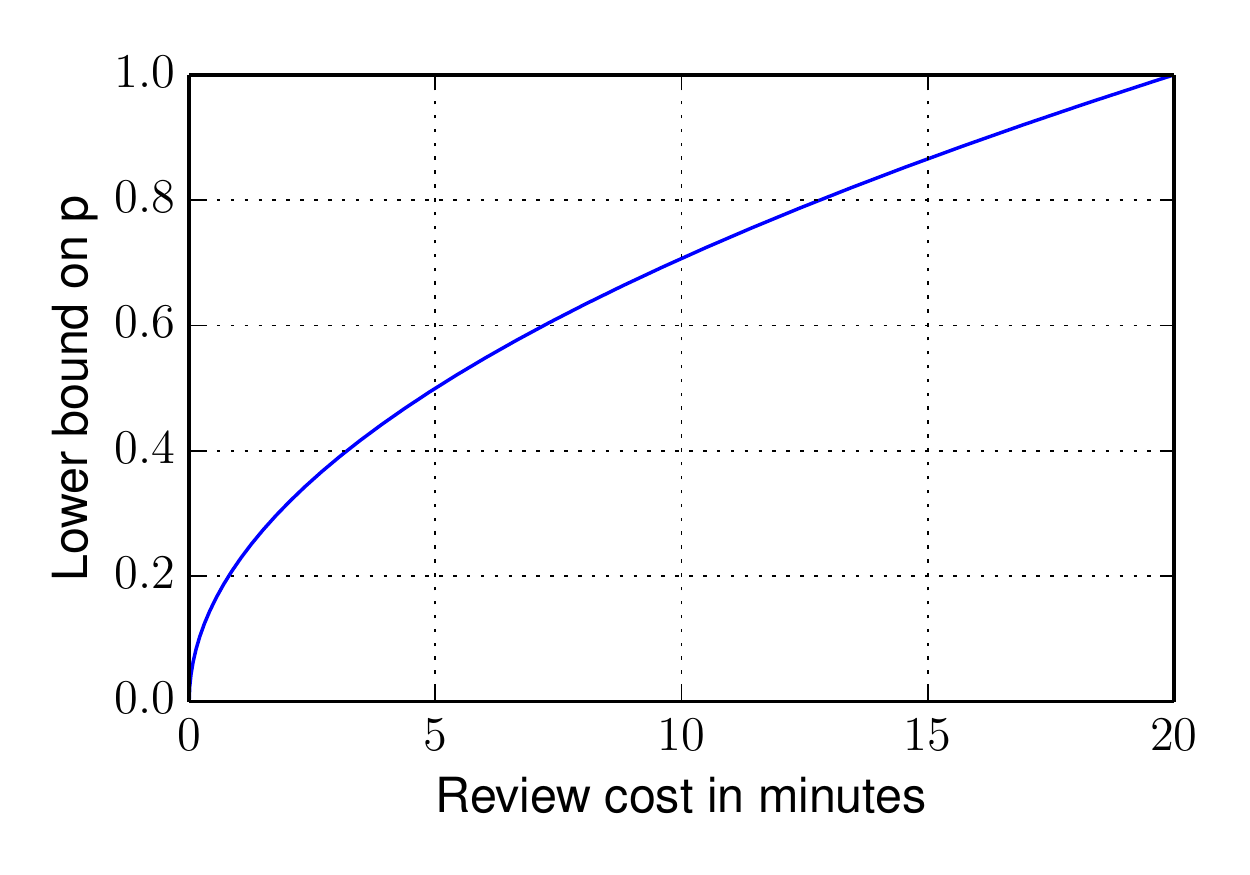}
\caption{The lower bound for probability $p$ of being reviewed by the instructor as a function of cost in minutes of doing a review.}
\label{fig-prob-vs-cost}
\end{figure}

\subsubsection{Fairness and Incentives}

The loss of a student participating in peer-grading with the proposed one-level supervised scheme is given in (\ref{eq-expected-squared-loss}).
The loss consists in two components: one due to comparison with other students, one due to comparison with the instructor.
The comparison with other students might engender unfairness, in the case in which a truthful student is compared with students who grade carelessly.
On the other hand, this portion is important for two reasons.
First, if this part were missing, and student received a loss only when compared with the instructor grade, there would be an obvious (if random) source of unfairness due to the random choice of the students whose review work is compared with the instructor's.
Second, this component of the loss makes the overall system more effective, as it amplifies the incentive provided by the instructor beyond the students that are directly reviewed.

\subsubsection{Experimental Results}

We have implemented the one-level incentive approach described here in the tool CrowdGrader \cite{sigcse2014}.
Let us nickname a {\em max-grader\/} a student who gave maximum grade to all submissions he or she reviewed.
We report here the statistics for the Winter and Spring quarter of 2015, that is, from the beginning of January, to the Summer break, for classes with at least 50 students.

Before the one-level incentive approach was implemented, the percentage of max-graders was 24.3\%, as measured over 93 assignments and 8,190 total submissions.
The class whose behavior was reported in Figure~\ref{fig-frac-max-reviewers} belonged to this set.

After we allowed instructors and TAs to also grade submissions, on 31 assignments where this option was used, for a total of 3,781 submissions, the percentage of max-graders dropped to 11.3\%.
To see whether this percentage reflected collusion, we evaluated the percentage of top grades that, upon instructor review, turned out to be justified.
Over these 28 assignments, 62.7\% of top grades were confirmed by the instructor within 5\% (i.e., the instructor gave a grade within 5\% of the top grade), and 73.3\% of the top grades were confirmed within 10\%.
Therefore, in classes where the incentive scheme described in this section was introduced, collusion in giving unjustified top grades effectively ceased.

\subsubsection{Speed of Convergence}
Theorem~\ref{th-one-level-supervised} provides a lower bound on $p$.
However, values of $p$ above the bound provide incentive of different strength.
If we imagine a sequence of best responses grades for a submission, then $p$ specifies the speed of convergence towards the true grade.
Next proposition obtains the speed of convergence as a function of $p$.

\begin{proposition}
  \label{prop-convergence-rate}
  Consider a sequence of the best response grades on submission $i\in I$.
  On the first step every $u\in U$ grades submission $i$ with grade $D$.
  On step $t>1$ every user grades submission $i$ with the best response to strategy on step $t-1$.
  Denote the evaluation error on step $t$ s $e_t$.
  In such iterative process the error decreases geometrically:
  \begin{align*}
    e_t = (1 - p)^2 e_{t-1} \eqpun .
  \end{align*}
\end{proposition}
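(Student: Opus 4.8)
The plan is to reduce the iteration to the single best-response formula already established in Lemma~\ref{lem-p-for-const-strategies}, and then read off the geometric decay by squaring. The first observation is one of symmetry: all reviewers begin at the common grade $D$ at step~$1$, and at every subsequent step each reviewer applies the identical best-response rule against an identical configuration of opponents. Hence, by induction, all reviewers share a single common grade $g_t$ at each step $t$. In particular, when any reviewer $u\in\partial i$ computes her best response at step $t$, the average $\frac{1}{|\partial i|-1}\sum_{v\in\partial i\setminus u}g_{iv}$ of the other reviewers' grades equals the common value $g_{t-1}$ carried over from the previous step. This is precisely the setting of Lemma~\ref{lem-p-for-const-strategies}, with the constant $D_i$ there playing the role of $g_{t-1}$ here.

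Given this, I would invoke the best-response grade (\ref{eq-best-resp-grade}) with $D_i$ replaced by $g_{t-1}$: minimizing the loss (\ref{eq-expected-squared-loss}) against opponents all fixed at $g_{t-1}$ yields
\begin{align*}
  g_t = (1-p)\,g_{t-1} + p\,q_i \eqpun .
\end{align*}
Subtracting $q_i$ from both sides turns this into a clean recursion on the signed discrepancy,
\begin{align*}
  g_t - q_i = (1-p)\,(g_{t-1} - q_i) \eqpun .
\end{align*}

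Finally, since the evaluation error $e_t$ is the squared discrepancy $(g_t-q_i)^2$ --- consistent with the squared losses used throughout the paper --- squaring the displayed recursion gives $e_t = (1-p)^2\,e_{t-1}$, the claimed geometric rate. The computation itself is routine; the only two points that require care are the symmetry argument that keeps all reviewers at a common grade at each step (so that Lemma~\ref{lem-p-for-const-strategies} applies verbatim, and we may treat each step's grade as a single number rather than a distribution, justified also by Lemma~\ref{lemma-smaller-var-better-strategey}), and the bookkeeping that the quantity called $e_t$ is the \emph{squared} error, which is what produces the exponent~$2$ rather than~$1$ in the decay factor.
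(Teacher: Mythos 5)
Your proof is correct and follows essentially the same route as the paper's: both hinge on the best-response minimizer $g_t = (1-p)g_{t-1} + pq_i$ derived from the loss (\ref{eq-expected-squared-loss}), with the paper carrying out the induction via the closed form $g_t = q_i + (1-p)^{t-1}(D-q_i)$ while you work directly with the one-step recursion on $g_t - q_i$ and square it. Your explicit symmetry remark (that all reviewers share a common grade at every step, so the opponents' average is just $g_{t-1}$) is a small clarification the paper leaves implicit, but it does not change the argument.
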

\begin{proof}
  We will show that the best response grade on iteration $t$ is
  \begin{align*}
   g_t = q_i + (1-p)^{(t-1)}(D-q_i) \eqpun .
  \end{align*}
  The proof is by induction on $t$.
  For $t=1$ the grade of submission $i$ is $g_1=D$ by the assumption of the theorem.
  Next, assume that
  \begin{align}
    \label{prop-convergence-rate:eq-t-1}
   g_{t-1} = q_i + (1-p) ^{(t-2)}(D-q_i) \eqpun ,
  \end{align}
  and let us show that
  \begin{align}
    \label{prop-convergence-rate:eq-t}
    g_t = q_i + (1-p) ^ {(t-1)}(D-q_i) \eqpun .
  \end{align}
  Indeed, the best response grade for grades on iteration $t-1$ is the grade that minimizes loss (\ref{eq-expected-squared-loss}), $(1-p)(x - g_t) ^2 + p(x - q_i)^2$.
  The loss achieves its minimum at
  \begin{align}
  \label{prop-convergence-rate:from-lemma}
  g_t = (1-p)g_{t-1} + pq_i \eqpun .
  \end{align}
  Equations (\ref{prop-convergence-rate:eq-t-1}) and (\ref{prop-convergence-rate:from-lemma}) yield (\ref{prop-convergence-rate:eq-t}).
  Therefore the error at step $t$ is
  \[
    e_t = (g_t - q_i)^2 = (1-p)^{2(t-1)}(D-q_i)^2 \eqpun .
  \]
  Thus $e_t = (1-p)^2 e_{t-1}$.
\end{proof}

\subsection{Hierarchical Approach}
\label{sec-hierarchical}

In this section we develop a hierarchical grading schema that requires
a fixed amount of work from the instructor to provide an incentive to
grade truthfully.

The schema organizes reviewers into a {\em review tree.}
The internal nodes of the review tree represent reviewers; the leafs
represent submissions.
A parent-child relation between reviewers indicates that the child review grade
depends on the parent evaluation.
A parent node and a child node share one submission they both
reviewed; this shared submission is used to evaluate the quality of
the child node's review work.
The root of the tree is the instructor.

\begin{definition}
  A review tree of depth $L$ is a tree with submission as leaves, student as
  internal nodes, and the instructor as root.
  The nodes are grouped into levels $l = 0, \ldots,
  L-1$, according to their depth; the leaves are the nodes at level
  $L-1$ (and are thus all at the same depth).
  In the tree, every node at level $0 \leq l < L - 1$ reviews exactly
  one submission in common with each of its children.
\end{definition}

To construct a review tree of branching factor at most $K$, we proceed
as follows.
We place the submissions as leaves.
Once level $l$ is built, we build level $l-1$ by enforcing a branching
factor of at most $B$.
For each node $x$ at level $l$, let $y_1, \ldots, y_n$ be its
children.
For each $y_1, \ldots, y_n$, we pick at random a submission $s_i$
reviewed by $y_i$, and we assign to $x$ to review the set
$\{s_1, \ldots, s_n\}$ of submissions.
At the root of the tree, we place the instructor, following the same
method for assigning submissions to review to the instructor.
Figure~\ref{fig-review-tree} illustrates a review tree with branching
factor 2 and depth 3.

\begin{figure}[!ht]
\centering
\includegraphics[width=0.75\linewidth]{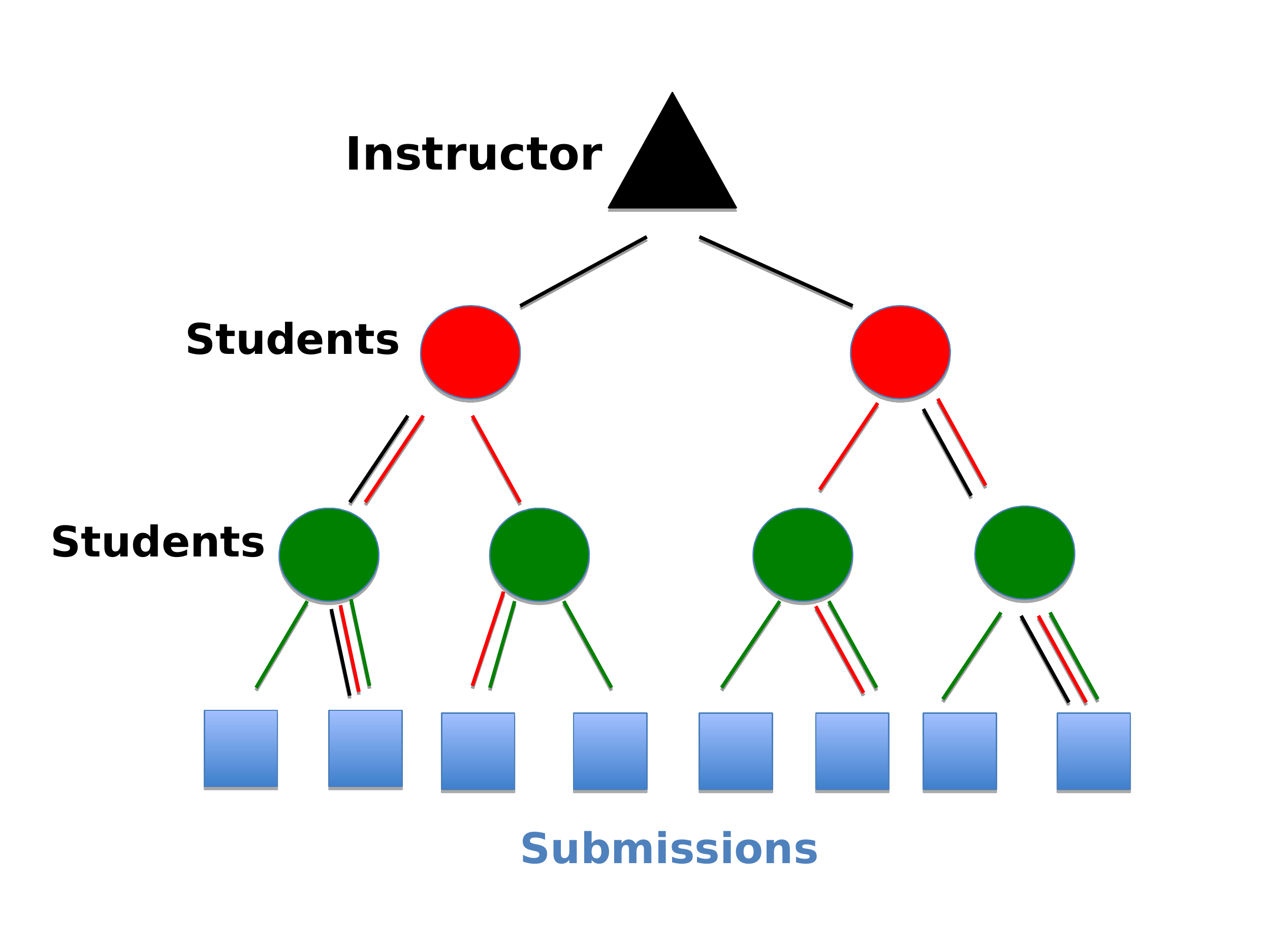}
\caption{An example of a review tree with branching factor 2. The
  process starts bottom up. Each student reviews 2 submissions.
For each depth-2 student, a depth-1 student grades one of the two
submissions that the depth-2 student has graded (red edges at bottom
level).
The evaluation of the depth-2 student will depend on the difference of these two grades according to the loss function.
Similarly, the instructor evaluates a depth-1 student by grading one
of the two submissions that the depth-1 student has graded (black edges).}
\label{fig-review-tree}
\end{figure}

In a tree constructed thus, there are many submissions that have only
one reviewer.
This construction suffices for the purposes of this section, but if
desired, it is possible to construct a dag, rather than a tree, so that each
submission is reviewed by multiple reviewers at the tree level
immediately above.
While the tree organizes their review activity hierarchically, the
students participating all do the same task: they review papers. 
In particular, the review scheme does not require any explicit
meta-review activity.



The review loss of a reviewer $y$ in the tree is computed by
considering the parent $x$ of $y$, and the grades $g_x$ and $g_y$
assigned by $x$ and $y$ on the submission they both graded.
The loss of reviewer $y$ is given by $(g_x - g_y)^2$.

We assume that the instructor provides true grades, that are accurate
and without bias.
Under the assumption of rational players, the
next theorem proves that if reviewers are evaluated by the average squared
loss (\ref{eq-square-loss}), then the set of $\sigma$-truthful
strategies contains all Nash equilibria.

\begin{theorem}
    \label{th-supervised-tree}
    If reviewers are rational, then the truthful strategy is the only
    Nash equilibrium of players arranged in a review tree.
\end{theorem}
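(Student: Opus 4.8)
The plan is to argue by induction on the levels of the review tree, from the root downward, showing that in \emph{any} Nash equilibrium every student at every level is forced to grade truthfully. The engine of the induction is a single best-response fact: a reviewer whose parent grades truthfully can do no better than to grade truthfully as well.

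First I would establish this best-response fact. Fix a student $y$ whose parent $x$ grades truthfully, so that $\E{g_x(q_i)} = q_i$ on their shared submission $i$, with $x$'s measurement error independent of $y$'s. Since $y$ is rational, $y$ minimizes the expected loss $\E{(g_x(q_i) - g_y(q_i))^2}$. Applying the bias--variance decomposition (Lemma~\ref{lem:1} of the Appendix) together with the independence of the two reviewers' errors, this expectation splits as
\begin{equation*}
\Var{g_x(q_i)} + \Var{g_y(q_i)} + (\E{g_y(q_i)} - q_i)^2 \eqpun .
\end{equation*}
The first term lies outside $y$'s control, so $y$ is in effect minimizing $\E{(g_y(q_i) - q_i)^2}$, which is precisely the quantity controlled in the $\sigma$-truthfulness condition of Definition~\ref{def-sigma-truth}. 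By Lemma~\ref{lemma-smaller-var-better-strategey}, among strategies of equal bias the one of least variance is preferred, and the truthful strategy is the unique admissible strategy that attains zero bias simultaneously for every submission quality $q \in [0, M]$ while keeping the variance minimal. Hence truthful grading is $y$'s unique best response.

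With this in hand I would run the induction. The base case is the instructor at the root, who grades truthfully by assumption. For the inductive step, suppose that in a given Nash equilibrium all nodes at levels $0, \ldots, l-1$ grade truthfully; then every node $y$ at level $l$ faces a truthful parent, so by the best-response fact its only loss-minimizing admissible strategy is the truthful one. Because a Nash equilibrium requires each player to best-respond, $y$ must be truthful. Descending to the leaves shows that every student is truthful in any Nash equilibrium, establishing uniqueness; the very same best-response fact shows that in the all-truthful profile no player wishes to deviate, so that profile is indeed a Nash equilibrium.

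The step I expect to be the main obstacle is the claim that truthful grading is the \emph{unique} best response to a truthful parent. The difficulty is that the squared loss penalizes only the discrepancy on the single shared submission, whereas an admissible strategy must apply one grade-modification function $f$ across all submissions; a shrinkage-type $f$ that pulls estimates toward a fixed point could actually lower the loss on submissions whose quality happens to lie near that point. Ruling such strategies out is exactly where the requirement that a best response hold for \emph{every} quality $q \in [0, M]$ does the work: any non-identity $f$ (or nonzero intentional bias) that helps at some qualities must hurt at others, so no strategy but the truthful one is optimal uniformly. I would also verify carefully that the parent's and child's measurement errors are independent, so that the parent's irreducible variance enters $y$'s objective only as an additive constant and does not shift the child's optimum.
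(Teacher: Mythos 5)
Your proof follows essentially the same route as the paper's: induction down the review tree from the truthful instructor at the root, reducing each child's problem to minimizing the expected squared deviation from the true quality once its parent is known (inductively) to be truthful. Your additions---separating out the parent's own variance as an additive constant via the bias--variance decomposition, and explicitly confronting why a shrinkage-type $f$ cannot beat the identity---are refinements of the same argument rather than a different one; the paper simply asserts that the loss $\EE_{i \in I_v}\E{(g_{iv} - q_i)^2}$ is ``clearly'' minimized by $g_{iv} = q_i$, so if anything you are more careful at the one step the paper glosses over.
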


\begin{proof}
    We will prove by induction on the depth $l = 0, 1, \dots, L-1$  of the
    tree that the only Nash equilibrium for players at depths up to
    $l$ is the truthful strategy.
    At depth $0$, the instructor provides true grades, and the result
    holds trivially, as the instructor plays a fixed truthful strategy.
    Let us consider a reviewer $v$ at depth level $k$, and denote by
    $I_v$ the set of submissions reviewed by $v$.
    Since $v$ does know know which submission in $I_v$ has been
    reviewed also by its parent, and since the parent is by induction
    hypothesis truthful, the expected loss of $v$ can be
    written as
    \[
       \EE_{i \in I_v}\E{(g_{iv} - q_i)^2} \eqpun ,
    \]
    where the first expectation is taken over the submissions graded
    by $v$, and the second is taken on the grade $g_{iv}$ assigned by
    $v$ to $i$.
    It is clear that this loss is minimized when $g_{iv} = q_i$ for
    all $i \in I_v$, that is, when $v$ plays the truthful strategy.
\end{proof}

The grading scheme based on a random review tree ensures that users achieve the smallest loss when grading with the truthful strategy.
However, some students still might not chose the truthful strategy as it requires effort to evaluates submissions.
Our next result provides a general condition for users to prefer the
honest behavior in a random review tree.

\begin{theorem}
  Let users $U$ be organized into a review tree with branching factor $K$.
  Let $H\in \mathbb{R}$ and $D\in \mathbb{R}$ be costs for a user to grade honestly and to defect respectively.
  If a user defects and is caught by its superior then the punishment is $P\in \mathbb{R}$.
  Then, users have incentive to stay honest if
  \begin{align}
    \label{ineq-defect}
    P > K (H - D)  \eqpun .
  \end{align}
\end{theorem}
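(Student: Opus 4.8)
The plan is to analyze a single reviewer $y$ in the tree while taking as given that its parent (superior) $x$ grades honestly. This is exactly the setting provided by the preceding Theorem~\ref{th-supervised-tree}: in any Nash equilibrium every node plays the truthful strategy, the instructor at the root is honest by assumption, and the induction carried out there lets us treat the parent as truthful. The structural fact I would exploit is that $y$ reviews $K$ submissions (one shared with each of its $K$ children), and that the tree is built by having $x$ pick, uniformly at random, exactly one of $y$'s reviewed submissions to grade in common. Crucially, $y$ cannot tell which of its $K$ reviews is the monitored one.

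Next I would parametrize $y$'s deviation by the number $j\in\{0,1,\ldots,K\}$ of submissions on which $y$ defects (grades lazily) rather than honestly. Reading $H$ and $D$ as the \emph{per-review} costs of honest grading and of defecting — the reading that makes the factor $K$ appear — the grading cost of this mixed behavior is $(K-j)H + jD$. Since the honest parent detects a discrepancy precisely when the monitored submission is one on which $y$ defected, and since $y$ has no information distinguishing the monitored review from its other $K-1$ reviews, the probability of being caught is $j/K$, for an expected punishment of $(j/K)P$. The total expected disutility is therefore
\[
\Phi(j) = (K-j)H + jD + \frac{j}{K}\,P = KH - j(H-D) + \frac{j}{K}\,P \eqpun .
\]

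Then I would simply observe that $\Phi$ is affine in $j$, with slope $-(H-D) + P/K$. Honest grading is $j=0$, and it is the unique minimizer of $\Phi$ over $\{0,\ldots,K\}$ exactly when this slope is strictly positive, i.e. when $P/K > H-D$, which rearranges to the claimed bound $P > K(H-D)$. Under this condition every defection $j\ge 1$ strictly increases expected disutility, so honesty is the strict best response of $y$ against honest superiors, and users have an incentive to stay honest.

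The part to get right is not the arithmetic, which is immediate once the model is fixed, but the probabilistic modeling of ``being caught.'' The key step is to justify that the monitored submission is uniform over $y$'s $K$ reviews (equivalently, that $y$ holds no information that would let it defect selectively on the unmonitored reviews), so that a $j$-review defection is caught with probability exactly $j/K$; this is what the random tree construction guarantees. A secondary point is to confirm that an honest parent's grade on the shared submission is close enough to the truth that any genuine defection yields a detectable discrepancy, which follows from the assumption that the instructor, and hence by the induction of Theorem~\ref{th-supervised-tree} every honest ancestor, grades without bias.
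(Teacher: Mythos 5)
Your proposal is correct and follows essentially the same route as the paper: the paper's proof is a two-line induction on tree levels that compares the per-review gain $H-D$ from defecting against the expected punishment $\tfrac{1}{K}P$ arising from the uniformly random, undetectable choice of the monitored review. Your parametrization by the number $j$ of defected reviews is just a slightly more explicit version of that same comparison (the affine-in-$j$ objective has slope $-(H-D)+P/K$, which is exactly the paper's per-review inequality), so the two arguments coincide.
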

\begin{proof}
Similarly to Theorem~(\ref{th-supervised-tree}), the proof is by
induction of level $l=0, \dots, L-1$ of the tree.
A reviewer $u$ from level $l$ has incentives to stay truthful on a
review if the gain $H - D$ due to defecting is smaller than the
expected punishment $\frac1K P$.
Thus, if inequality (\ref{ineq-defect}) holds, reviewer $u$ has
incentive to play truthfully.
\end{proof}

As an application of the above result, we consider a scenario when reviewers are organized into a random review tree
with branching factor $K$, and must choose between a truthful grading
strategy, and grading with the maximum grade $M$.
The punishment of a reviewer for deviating is the loss in utility $(l_D - l_H)$,
where $l_H, l_D$ are expected losses of the truthful and the maximum grade strategies.
We have $l_H = 0$ and $l_D = \E{M-q_i}^2$, where the expectation is taken over the
distribution of true item qualities.
Expression $\E{M-q_i}^2$ can be simplified to $\sigma_q^2 + (M-Eq)^2$,
where $\sigma_q^2$ and $Eq$ are the variance and the mean of the true quality distribution.
The cost of being truthful is $H=C$; the cost of defecting is $D=0$.
Inequality (\ref{ineq-defect}) yields
\begin{equation}
  \nonumber
  l_C - l_H > K C \qquad
  \label{ineq-hierarch-cost-bound}
  C < \frac{\sigma_q^2 + (M-Eq)^2}{K}  \eqpun .
\end{equation}
If the true item qualities are mostly distributed close to the maximum $M$,
then users have less incentives to put effort in grading.
We are interested in analyzing parameters of the true quality distribution and costs $C$ that
satisfy inequality (\ref{ineq-hierarch-cost-bound}).
For the class example in section \ref{sec-supervised-flat}, we
considered cost $C = 3x / 4$ where
$x$ is the amount of time in hours it takes to review a submission.

In Figure~\ref{fig-hierarchical-cost}, we
plot the lower bound of the variance $\sigma_q^2$ as a function of the average submission quality $Eq$ for
reviewing costs 5, 10, 20 and 60 minutes.
If the average item quality is 9.5 then the variance should be at least 1 to ensure incentives for grading with reviewing time less than 20 minutes.
Interesting enough, for this case strategies that grade with the maximum strategy are 1-truthful.

\begin{figure}[!ht]
\centering
\includegraphics[width=0.75\linewidth]{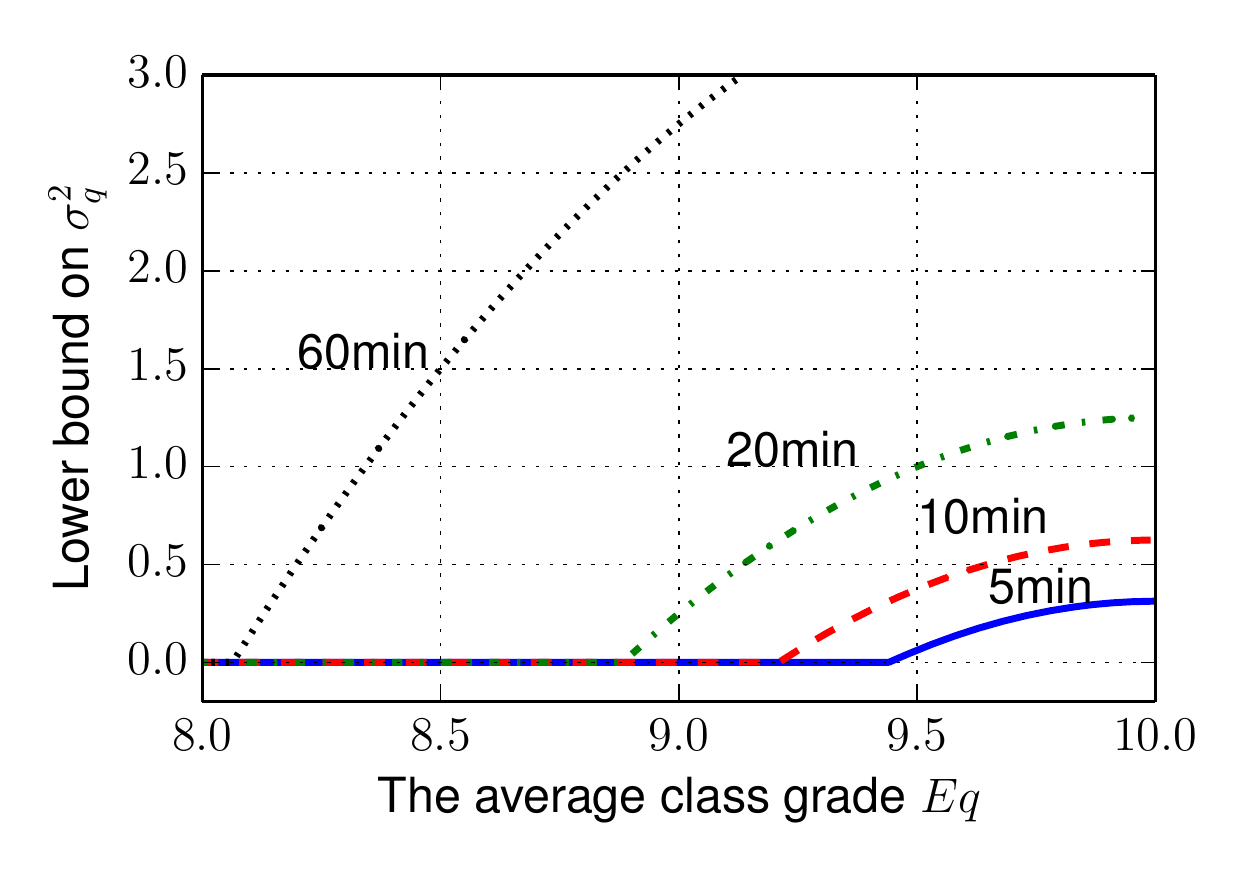}
\caption{Lower bound on the variance of the submission quality distribution $\sigma_q^2$ as a function of the average submission quality $Eq$ for reviewing costs 5, 10, 20 and 60 minutes.}
\label{fig-hierarchical-cost}
\end{figure}

\section{Unsupervised Grading}

In the previous section we considered supervised approaches that
required the instructor to grade a subset of submissions.
In this section we explore a grading scheme that relies on a priory
knowledge of the typical grade distribution in assignments.
The advantage of the scheme is that it does not require work on the
instructor's part; this comes with the drawback, however, that the
scheme might be unfair to individual students.
In many cases the instructor, based on experience and historical data,
has expectations about the overall grade distribution in the class.
By tying grade distributions to student incentives, we can create
incentives for truthful grading.
In particular, we will be able to show that the students will prefer
the truthful grading strategy to strategies that are truthful but have
large noise, and to strategies that always provide a fixed grade, plus
optional noise.
The drawback of the incentives we consider is the potential unfairness towards
individual students, as we will discuss in more detail later.

We assume that the variance $\sigma_q^2$ of the true quality
distribution is known, usually via an analysis of the performance of
students in past similar assignments.
We propose a loss function (\ref{eq:loss-var}) for a reviewer that
consists of two parts:
\begin{align}
    \label{eq:loss-var}
    l_{\unsupLossName}(u, G,\gamma) = l'_2(u, G) - \gamma
    \hat{\sigma}^2 \eqpun .
\end{align}
The first part $l'_2(u, G)$, defined by (\ref{eq:loss-var-l2-part}),
measures the agreement between grades by reviewer $u$ and the average
grades by other reviewers.
It is similar to the loss $l_2(u, G)$ defined by
(\ref{eq-square-loss}), except for the fact that the average consensus
grade excludes grades by the reviewer:
\begin{align}
    \label{eq:loss-var-l2-part}
    l'_{2}(u, G) = \frac{1}{|\partial u|}\sum_{i\in \partial u}\left(g_{iu} -\frac{1}{|\partial i|-1}\sum_{v\in\partial i\backslash u}g_{iv}\right)^2 \eqpun .
\end{align}
We propose two versions for the second part: a {\em local\/} and a
{\em global\/} version.
In the {\em local\/} version, we define $\hat{\sigma}^2$ as the sample
variance (\ref{eq:loss-var-part}) of the grades that the reviewer has
given to assigned submissions:
\begin{align}
    \label{eq:loss-var-part}
    \hat{\sigma}^2 = \frac{1}{|\partial u|-1}\sum_{i \in \partial u}\left(g_{iu} - \frac{1}{|\partial u|}\sum_{j \in \partial u}g_{ju}\right)^2 \eqpun .
\end{align}
In the {\em global\/} version, we define $\hat{\sigma}^2$ as the
overall variance of the grades in graph $G(U\times I, E)$:
\begin{equation}  \label{eq:loss-var-part-glob}
    \hat{\sigma}^2 = \frac{1}{|E|-1}\sum_{(i,u, g_{iu}) \in E}\left(g_{iu} - \frac{1}{|E|}\sum_{(j,w, g_{jw}) \in E}g_{jw}\right)^2 \eqpun .
\end{equation}
Both versions penalize students who give grades that are too
similar: the local version penalizes a student based on the variance of the grades that this student has assigned,
while the global version considers the variance of all grades assigned by students across all submissions.
The parameter $\gamma>0$ controls the influence of these variances.

Thus, the overall loss function (\ref{eq:loss-var}) consists of two
components: a positive one that accounts for disagreement with other students; and a negative one that accounts for variance among grades, either global or local.

We will study the preference of students with respect to two classes
of strategies:
the strategies that report the true grade, plus possible additive
noise, and the strategies that report a constant grade, plus possible
additive noise.
These two strategies correspond to the two possible behaviors of a
student: either review the submission, and report its grade plus some
noise, or skip the review, and report a constant grade plus some
noise.
In the latter case, the student adds some random noise to overcome the lack of
variance that will be penalized by the loss function.
Since the second component of the loss  (\ref{eq:loss-var}) encourages
variance, students will not prefer to play the maximum grade
strategy.
However, a natural way to overcome the penalization of zero variance
in their attempt to game the system is to add noise to a constant
grade.

We compare strategies according to the expected loss $L(u, G, \gamma)
= \E{l_{\unsupLossName}(u, G, \gamma)}$, where the expectation is
taken over the submission quality distribution and the evaluation
errors of all reviewers involved in grading submissions $\partial u$.

To express the result precisely, we introduce the following sets of
grading strategies.
Let $\mathcal{A}'\subset \mathcal{A}$ be the set of grading strategies
that report the true grade plus additive noise, i.e.,
$\pi \in \mathcal{A}'$ iff there exists a random variable $\xi_{\pi}$
such that for every submission $i\in I$, we have $g_{\pi}(q_i) = q_i + \xi_{\pi}$.
Let $\Phi$ be the subset of strategies $\mathcal{A}'$ whose noise has
an expected value of~0, that is, such that
$\pi\in \Phi$ iff $\pi\in \mathcal{A}'$ and $E\xi_{\pi} = 0$.
Note that $\Phi \subset \mathcal{A}' \subset \mathcal{A}$.
We denote by $\pi_t \in \Phi$ the truthful strategy, for which
$\xi_{\pi_t}$ is identically~0.

We also introduce a set of strategies that grade submissions with a
constant grade plus additive noise.
For $\eta>0$ and $D\in [0, M]$, let $\mathcal{D}_{\eta}\in
\mathcal{A}$ be the set of strategies $\pi$ such that $g_{\pi}(q_i) =
D + \varepsilon$, where the random variable $\varepsilon$ has variance
$\eta^2$.
Note that $\mathcal{D}_{\eta} \subset \mathcal{A}$.

The next theorem expresses the preference for the truthful strategy,
compared to both strategies in $\mathcal{D}_{\eta}$, and strategies in $\Phi$.

\begin{theorem} \label{th-unsup}
Consider the loss function (\ref{eq:loss-var}), with $\hat{\sigma}^2$
defined by either (\ref{eq:loss-var-part}) or
(\ref{eq:loss-var-part-glob}).
For $0 < \gamma < 1$, the following statements hold:
    \begin{itemize}
    \item  If every reviewer $v\in U\backslash u$ plays with a strategy $\Phi$ and reviewer $u$  is limited to strategies $\pi\in \mathcal{A}'$, then reviewer $u$ minimizes her loss by playing with the truthful strategy $\pi_t$.
        \item For any $\eta \ge 0$, reviewers have smaller loss when
          they play with the truthful strategy $\pi_t$
          compared to strategies $D_{\eta}$.
    \end{itemize}
\end{theorem}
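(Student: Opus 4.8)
The plan is to reduce both statements to a bias--variance accounting of the two parts of $l_{\unsupLossName}$, retaining only the terms that depend on reviewer $u$'s own strategy. Throughout I write $u$'s grade on a reviewed submission $i$ as $g_{iu}=q_i+\xi_i$ for a strategy in $\mathcal{A}'$, with $\E{\xi_i}=e$ and $\Var{\xi_i}=s^2$, and as $g_{iu}=D+\varepsilon_i$ for a strategy in $\mathcal{D}_\eta$, with $\Var{\varepsilon_i}=\eta^2$; every co-reviewer plays some $\pi\in\Phi$, so $g_{iv}=q_i+\xi_{iv}$ with $\E{\xi_{iv}}=0$. The truthful $\pi_t$ is the special case $e=s=0$.

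First I would treat the agreement term $l'_2$. On each submission the residual $g_{iu}-\frac{1}{|\partial i|-1}\sum_{v\in\partial i\backslash u}g_{iv}$ equals $u$'s own deviation minus a zero-mean average of the co-reviewers' noise, so the true quality $q_i$ cancels. Applying the bias--variance split (Lemma~\ref{lem:1} of the Appendix) together with independence and $\E{\xi_{iv}}=0$, the expected agreement term is $\E{l'_2}=e^2+s^2+A$ for a strategy in $\mathcal{A}'$, where $A$ collects the co-reviewer variance and does not depend on $u$'s strategy; for $\mathcal{D}_\eta$ the same computation, after averaging $\E{(D-q)^2}$ over the quality distribution, gives $\E{l'_2}=\sigma_q^2+(D-\EE q)^2+\eta^2+A$.

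Next I would compute $\E{\hat\sigma^2}$ in each version. For the \emph{local} version $\hat\sigma^2$ is the unbiased sample variance of $u$'s own grades, so the offset $D$ or the qualities $q_i$ are immaterial to the mean and one gets $\E{\hat\sigma^2}=\sigma_q^2+s^2$ for $\mathcal{A}'$ and $\E{\hat\sigma^2}=\eta^2$ for $\mathcal{D}_\eta$. For the \emph{global} version I would write $\hat\sigma^2=\frac{1}{|E|-1}\bigl(\sum g^2-\frac{1}{|E|}(\sum g)^2\bigr)$ and isolate the part depending on $u$'s noise: separating $u$'s grades from the rest and using independence, the variance $s^2$ contributes $\frac{|\partial u|}{|E|}s^2$, while the bias $e$ produces a genuinely linear term whose coefficient is a difference of two quality sums, $\sum_{i\in\partial u}q_i$ and $\frac{|\partial u|}{|E|}\sum_{(i,w,g)\in E}q_i$, each with expectation $|\partial u|\,\EE q$ over the quality distribution. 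I expect this cancellation of the linear bias term under the quality expectation---together with the bound $\frac{|\partial u|(|E|-|\partial u|)}{|E|(|E|-1)}\le 1$ for the residual $e^2$ coefficient---to be the one genuinely delicate step, since the global variance couples $u$'s grades to everyone else's through the pooled mean; the local version is by comparison immediate.

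With these expressions the two bullets follow by arithmetic. For the first bullet, $L=\E{l'_2}-\gamma\E{\hat\sigma^2}$ reduces to $e^2(1-\gamma c_1)+s^2(1-\gamma c_2)+\text{const}$, where $c_1,c_2\in[0,1]$ ($c_1=0,\ c_2=1$ in the local version, and the two bounded ratios above in the global version); since $0<\gamma<1$ keeps both coefficients positive, $L$ is strictly increasing in $e^2$ and $s^2$ and is minimized at $e=s=0$, i.e.\ at $\pi_t$. For the second bullet, keeping the co-reviewers fixed as above and subtracting, the local case gives $L(\mathcal{D}_\eta)-L(\pi_t)=(1+\gamma)\sigma_q^2+(D-\EE q)^2+(1-\gamma)\eta^2$, with an analogous nonnegative combination in the global case; every summand is nonnegative for $0<\gamma<1$ and the $\sigma_q^2$ term is strictly positive, so $\pi_t$ is strictly preferred to every strategy in $\mathcal{D}_\eta$. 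The global-variance bookkeeping of the previous paragraph is the main obstacle; the remainder is the bias--variance decomposition already used earlier in the paper.
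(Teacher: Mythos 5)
Your treatment of the first bullet is essentially the paper's: you decompose $\E{l'_2}$ into $e^2+s^2$ plus a $u$-independent remainder, compute $\E{\hat\sigma^2}$ as $\sigma_q^2+s^2$ (local) and as $\frac{n}{K}s^2+\frac{n(K-n)}{K(K-1)}e^2+\Theta$ (global), and observe that $0<\gamma<1$ keeps both combined coefficients positive. This is exactly what the paper does via Lemmas~\ref{lem-long1} and~\ref{lem-long2}, and your identification of the cancellation of the linear bias term under $E_q$ as the delicate step in the global case is accurate.

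The second bullet is where your proof diverges from the paper's and, more importantly, fails to establish the claim the theorem is actually after. You compare a unilateral deviation: co-reviewers stay truthful while $u$ switches from $\pi_t$ to $D+\varepsilon$, which hands you a large free penalty $\sigma_q^2+(D-\EE q)^2$ in the agreement term because $u$ disagrees with a truthful consensus. The paper instead compares the two \emph{symmetric profiles}: everyone truthful versus everyone playing $D+\varepsilon$. That is the comparison that matters here, because the whole threat model is collusion --- when every reviewer reports $D+\varepsilon$, the quality terms cancel in $l'_2$ and the agreement loss collapses to $\frac{n}{n-1}\eta^2$, so the colluding profile is itself (close to) an equilibrium, and the only thing preventing it from dominating the truthful profile is the variance reward. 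The substantive inequality is $\frac{n}{n-1}\eta^2-\gamma\eta^2>-\gamma\sigma_q^2$, which the paper verifies by the case analysis $\eta^2\lessgtr\sigma_q^2$, and which is where the restriction $\gamma<1$ genuinely bites (for $\gamma>1$ and $\eta^2$ slightly above $\sigma_q^2$ it can fail). Your computation never confronts this case: the terms $(1+\gamma)\sigma_q^2+(D-\EE q)^2$ in your difference are artifacts of pitting a lone defector against honest peers, and their positivity says nothing about whether the class as a whole prefers the truthful profile to the colluding one. To repair the proof, redo the second bullet with all reviewers in $\mathcal{D}_\eta$ on one side and all reviewers truthful on the other.
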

\begin{proof}
    To prove the first part of the theorem, we analyze the expected loss (\ref{eq:loss-var}) of user $u\in U$  with strategy $\pi\in \mathcal{A}'$ when users $U\backslash u$ play with strategies $\Phi$.
    According to Lemma~\ref{lem-long1}, the expectation of the first component of the loss (\ref{eq:loss-var})
    is $\E{l'_{u, G}(u, G)} = \sigma_u^2 + b_u^2 + \Theta$,
    where $\sigma_u^2$ and $b_u$ is the variance and the expectation of error $\xi_\pi$, and $\Theta$ does not depend on user $u$.
    The second component can be defined whether by (\ref{eq:loss-var-part}) or (\ref{eq:loss-var-part-glob}).
    First, let us consider loss (\ref{eq:loss-var})  with $\hat{\sigma}^2$ defined by (\ref{eq:loss-var-part}).
    Expression (\ref{eq:loss-var-part}) is an unbiased variance estimator,
    therefore $E\hat{\sigma}^2 = \sigma_q^2 + \sigma_u^2$.
    Combining both parts together, $\E{l_{\unsupLossName}(u, G,\gamma)} = (1-\gamma)\sigma_u^2 + b_u + \Theta$.
    If $\gamma < 1$, then user $u$ minimizes her loss when $\sigma_u^2 = b_u = 0$, i.e. with the truthful strategy.
    Next, let us consider the loss (\ref{eq:loss-var}) with $\hat{\sigma}^2$ defined by (\ref{eq:loss-var-part-glob}).
    According to Lemma~\ref{lem-long1}, $\E{\hat{\sigma}^2} = \frac{n}{K} \sigma_u^2 + \frac{n(K-n)}{K(K-1)}b_u^2+\Theta$, where $K=|E|$ and $n = |\partial u|$.
    Therefore, $\E{l_{\unsupLossName}(u, G,\gamma)} = (1-\gamma\frac{n}{K})\sigma_u^2 + (1-\gamma\frac{n(K-n)}{K(K-1)})b_u + \Theta$.
    Again, the truthful strategy yields the smallest loss when coefficients $(1-\gamma\frac{n}{K})$ and $(1-\gamma\frac{n(K-n)}{K(K-1)})$ are positive.
    It is straightforward to verify that for $\gamma < 1$ and $K\ge n$ both coefficients are positive.

    To prove the second part of the theorem we compare user losses in the following two scenarios.
    In the first scenario all users play with the truthful strategy.
    In the second scenario all user play with $\pi'$ such that $g_{\pi}(q_i) = D + \varepsilon$, where $\varepsilon$ has 0 expectation and variance $\eta^2$.
    The loss of the truthful strategy is $-\gamma\sigma_q^2$.
    The loss (\ref{eq:loss-var}) of strategy $\pi'$ can be computed directly from equation (\ref{eq:loss-var}).
    The first part of the loss is $\frac{n\eta^2}{n-1}$ as the variance of a difference between independent random variables $\varepsilon_u - \frac{1}{n-1}\sum_{v \in \partial i\backslash u}\varepsilon_v$ where $\varepsilon_v$ is an error by user $v$ while playing strategy $\pi'$.
    For both (\ref{eq:loss-var-part}) and (\ref{eq:loss-var-part-glob}), the second part of the loss is $-\gamma \tau^2$.
    Therefore, the loss of strategy $\pi'$ is $n \eta^2 / (n-1) - \gamma \eta^2$.

    The condition that the loss of strategy $\pi'$ is bigger than the loss of the truthful strategy is
    \begin{align}
        \label{th-unsup-ineq-gamma}
        \frac{n}{n-1}\eta^2 - \gamma \eta^2 > - \gamma \sigma_q^2 \eqpun .
    \end{align}
    To show that the inequality holds for $\gamma \in (0, 1)$, we consider three cases when $\eta^2 < \sigma_q^2$, $\eta^2 = \sigma_q^2$ and
    $\eta^2 > \sigma^2$.
    If $\eta^2 < \sigma_q^2$ inequality (\ref{th-unsup-ineq-gamma}) yields
    \begin{align*}
        \gamma > -\frac{n\eta^2}{(n-1)(\sigma_q^2 - \eta^2)} \eqpun .
    \end{align*}
    The right part of the inequality is always negative, therefore for $\gamma > 0$ reviewers have smaller loss when they play with the truthful strategy than with strategy $\pi'$.
    If $\eta^2 = \sigma_q^2$ then inequality (\ref{th-unsup-ineq-gamma}) is true for any $\gamma\in \mathbb{R}$.
    If $\eta^2 > \sigma_q^2$ then the condition on $\gamma$ is
    \begin{align*}
        \gamma < \frac{n\eta^2}{(n-1)(\eta^2 - \sigma_q^2)} \eqpun .
    \end{align*}
    The inequality holds for $\gamma < 1$ as the right hand side is always greater than 1 as $\frac{n}{n-1} > 1$ and $\frac{\eta^2}{\eta^2 - \sigma_q^2} > 1$.
    Therefore, for $\gamma \in (0, 1)$ inequality (\ref{th-unsup-ineq-gamma}) holds and  the truthful strategy has smaller loss than strategy $\pi'$.
\end{proof}

Theorem~\ref{th-unsup} assumes that the cost of reviewing a submission is zero or not included in the reviewer's utility function.
 For non-zero costs, we can still obtain a range for $\gamma$ such that the statements of Theorem~\ref{th-unsup} hold.

\begin{theorem}  \label{th-unsup-cost}
Consider the loss function (\ref{eq:loss-var}), with $\hat{\sigma}^2$
defined by either (\ref{eq:loss-var-part}) or
(\ref{eq:loss-var-part-glob}).
Let $C>0$ be reviewer cost to evaluate a submission.
For any $\eta >0$, if $C < \sigma_q^2$ then for $\gamma$ that satisfies the inequalities
\begin{align*}
        \max \left(0, \frac{C - \frac{n}{n-1}\eta^2}{\sigma_q^2 - \eta^2}\right) &< \gamma < 1, \; \text{for } \eta^2 < \sigma_q^2 \\
        0 &< \gamma < 1,\; \text{for } \eta^2 = \sigma_q^2\\
        0 &< \gamma < \min \left(1, \frac{C - \frac{n}{n-1}\eta^2}{\sigma_q^2 - \eta^2}\right),
        \; \text{for } \eta^2 > \sigma_q^2
\end{align*}
the two statements of Theorem~\ref{th-unsup} hold.
\end{theorem}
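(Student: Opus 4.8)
The plan is to reduce everything to the computations already carried out in the proof of Theorem~\ref{th-unsup}, and to account for the reviewing cost $C$ by identifying \emph{where} it enters each of the two comparisons. I would first observe that the first statement of Theorem~\ref{th-unsup} is a comparison taking place entirely \emph{within} $\mathcal{A}'$: every strategy in $\mathcal{A}'$ reports the true grade plus noise, so every such strategy requires the reviewer to actually measure the quality and therefore incurs the same cost $C$. Since $C$ is then an additive constant common to all candidate strategies, it cancels in the minimization, and the first statement continues to hold for exactly the same range $0<\gamma<1$ established before. This is why $\gamma<1$ appears as the upper endpoint in all three cases.

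The cost genuinely matters only for the second statement, where the truthful strategy (which reviews, paying $C$) is compared against a constant-plus-noise strategy $\pi'\in\mathcal{D}_\eta$ (which skips the review, paying nothing). From the proof of Theorem~\ref{th-unsup} I would reuse the two loss values: the truthful strategy has loss $-\gamma\sigma_q^2$, while $\pi'$ has loss $\frac{n}{n-1}\eta^2-\gamma\eta^2$ under either choice of $\hat{\sigma}^2$. Charging the cost to the truthful strategy only, the condition that truthful grading be preferred becomes
\begin{align*}
  -\gamma\sigma_q^2 + C < \tfrac{n}{n-1}\eta^2 - \gamma\eta^2 ,
\end{align*}
which I would rearrange into the single inequality $C-\frac{n}{n-1}\eta^2 < \gamma\,(\sigma_q^2-\eta^2)$.

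From here the three branches arise purely from the sign of $\sigma_q^2-\eta^2$. When $\eta^2<\sigma_q^2$ the coefficient is positive and dividing gives the lower bound $\gamma > (C-\frac{n}{n-1}\eta^2)/(\sigma_q^2-\eta^2)$; when $\eta^2>\sigma_q^2$ the coefficient is negative and dividing flips the inequality into the stated upper bound; and when $\eta^2=\sigma_q^2$ the $\gamma$-term vanishes and the requirement collapses to $C<\frac{n}{n-1}\sigma_q^2$, which holds automatically since $C<\sigma_q^2$, leaving all of $(0,1)$ admissible. Combining each of these with the constraint $\gamma\in(0,1)$ inherited from the first statement produces exactly the three ranges claimed, with the $\max$ and $\min$ serving only to clip the bounds into $(0,1)$.

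The one point that deserves real care—and what I would treat as the crux—is verifying that each interval is \emph{nonempty}, so that a usable $\gamma$ actually exists; this is precisely where the hypothesis $C<\sigma_q^2$ is spent. For $\eta^2<\sigma_q^2$ I would check that the lower bound lies below $1$ by showing $C-\frac{n}{n-1}\eta^2<\sigma_q^2-\eta^2$, i.e.\ $C<\sigma_q^2+\frac{\eta^2}{n-1}$, which follows from $C<\sigma_q^2$. For $\eta^2>\sigma_q^2$ I would check that the upper bound is strictly positive: since $C<\sigma_q^2<\eta^2\le\frac{n}{n-1}\eta^2$ the numerator $C-\frac{n}{n-1}\eta^2$ is negative and the denominator $\sigma_q^2-\eta^2$ is negative, so the ratio is positive and $(0,\min(1,\cdot))$ is nonempty. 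The sign-flip when dividing by a negative quantity, together with these two nonemptiness checks, is the entirety of the subtlety; every remaining step is bookkeeping already performed in Theorem~\ref{th-unsup}.
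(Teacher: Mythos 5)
Your proposal is correct and follows essentially the same route as the paper's proof: the same inequality $\gamma(\sigma_q^2-\eta^2) > C - \frac{n}{n-1}\eta^2$, the same three-way case split on the sign of $\sigma_q^2-\eta^2$, and the same nonemptiness checks that spend the hypothesis $C<\sigma_q^2$. Your explicit remark that $C$ cancels within $\mathcal{A}'$ for the first statement is a small clarification the paper leaves implicit, but it does not change the argument.
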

\begin{proof}
The proof of the theorem is tightly connected to the proof of Theorem~\ref{th-unsup}.
We showed that for $\gamma < 1$, the loss (\ref{eq:loss-var}) is minimized when user $u$ plays with the truthful strategy $\pi_t$, where the loss is defined by either (\ref{eq:loss-var-part}) or
(\ref{eq:loss-var-part-glob}).
Moreover, we limit our attention to $\gamma > 0$ as we want the second component of the loss (\ref{eq:loss-var}) to
penalize small variance among grades by the reviewers.
We will analyze the conditions on $\gamma\in(0,1)$ and $C>0$ so that the truthful strategy is more preferable that strategies $\mathcal{D}_\eta$.

If cost $C \ne 0$, then the loss of users who play with the truthful strategy is due to measurement cost $C$ and the loss of the truthful strategy according to function (\ref{eq:loss-var}).
Therefore, the condition that users have smaller loss when evaluating the true submission quality and playing with the truthful strategy than a strategy from $\mathcal{D}_\eta$ is satisfied when both of the following conditions hold:
\begin{align}
    \nonumber
    \frac{n}{n-1}\eta^2 - \gamma \eta^2 > - \gamma \sigma_q^2 + C \\[1ex]
    \label{th-unsup-cost-ineq-gamma}
    \gamma (\sigma_q^2 - \eta^2) > C - \frac{n}{n-1}\eta^2 \eqpun .
\end{align}
We consider 3 cases.

\smallskip
\noindent{\bf Case 1: $\eta^2 < \sigma_q^2$}.
The inequality (\ref{th-unsup-cost-ineq-gamma}) becomes
\begin{align}
    \nonumber
    \gamma > \frac{C - \frac{n}{n-1}\eta^2}{\sigma_q^2 - \eta^2} \eqpun .
\end{align}
The set of possible values for $\gamma$ is not empty if the right hand side of the inequality is less than 1.
This condition yields
\begin{align}
    \label{th-unsup-cost-c-ineq}
    C < \sigma_q^2 + \frac{1}{n-1} \eta^2 \eqpun .
\end{align}
Therefore, when $\eta^2 < \sigma_q^2$ and $C < \sigma_q^2$, reviewers has smaller loss by spending cost $C$ on evaluating the true grades and playing with the truthful strategy if
\begin{align*}
    \max \left(0, \frac{C - \frac{n}{n-1}\eta^2}{\sigma_q^2 - \eta^2}\right) &< \gamma < 1 \\
\end{align*}

\smallskip
\noindent{\bf Case 2: $\eta^2 = \sigma_q^2$}.
The inequality (\ref{th-unsup-cost-ineq-gamma}) always holds for $C < \sigma_q^2 = \eta^2$.
Thus $\gamma \in (0, 1)$.

\smallskip
\noindent{\bf Case 3: $\eta^2 > \sigma_q^2$}.
The inequality (\ref{th-unsup-cost-ineq-gamma}) becomes
\begin{align}
    \nonumber
    \gamma < \frac{C - \frac{n}{n-1}\eta^2}{\sigma_q^2 - \eta^2}
\end{align}
The set of possible values for $\gamma$ is not empty if the right hand side of the inequality is positive.
This yields the following condition on $C$:
\[
    \frac{C - \frac{n}{n-1}\eta^2}{\sigma_q^2 - \eta^2} > 0 \qquad
    C - \frac{n}{n-1}\eta^2 < 0 \qquad
    C < \frac{n}{n-1}\eta^2 \eqpun .
\]
We notice that if $C < \sigma_q^2$ and $\sigma_q^2 < \eta^2$, then $C<\sigma_q^2 < \eta^2 <\frac{n}{n-1}\eta^2$.
Therefore, when $\eta^2 > \sigma_q^2$ and $C< \sigma_q^2$ inequality (\ref{th-unsup-cost-ineq-gamma}) holds for
\begin{align*}
    0 &< \gamma < \min \left(1, \frac{C - \frac{n}{n-1}\eta^2}{\sigma_q^2 - \eta^2}\right) \eqpun .
\end{align*}
We have shown that for any $\eta>0$, $C<\sigma^2$ and $\gamma$ that satisfies the inequalities of the theorem, reviewers receive smaller loss if they spend grading cost $C$ and play with the truthful strategy  compared to playing with strategies $\mathcal{D}_\eta$.
\end{proof}

If $C \ge \sigma_q^2$, and assuming that colluding students can lower the variance $\eta^2$ accordingly, then there is no range of $\gamma$ for which the above properties hold.

The incentive scheme based on the loss (\ref{eq:loss-var}) is not
individually fair.
If we adopt the local definition of loss (\ref{eq:loss-var-part}),
then students who receive submissions that are close to each other in
qualilty are at a disadvantage, as their loss will be greater than
that of students who were assigned to review submissions of more
different value.
If we adopt the global definition of loss
(\ref{eq:loss-var-part-glob}), then students who are honest might be
individually penalized, if everybody else adopts a ``constant plus
noise'' strategy in $\mathcal{D}_\eta$.
As intructor grades are not available as absolute reference point, the
possibility of individual unfairness seems however unavoidable in
unsupervised grading incentive schemes.

\section{Conclusions}

We studied two supervised schemes and one unsupervised scheme which provide incentives for truthful grading in peer review.
In the {\it flat} supervised scheme, each student has a non-zero probability of being
graded by the instructor. We computed a lower bound on this probability so students have an incentive to grade truthfully. 
The lower bound shows that the {\it flat} supervised approach is applicable for classes of moderate size.
The second scheme, which we considered, organizes students into a hierarchy. 
The instructor grades a subset of submissions that were graded by the top ranked lieutenants.
In turn, lieutenants grade submission by lieutenants of lower rank.
This {\it hierarchical} scheme provides an incentive for truthful grading under the assumption of rational students.
The instructor and every student need to grade only a fixed number of submission no matter how big the class is.
The third scheme does not require supervision from the instructor.
Reviewers are evaluated based on a criteria that penalizes the lack of agreement with peers and
the lack of variance in review grades.
This scheme is not individually fair as a reviewer might be assigned submissions with low true quality variance.
However, we showed that in expectation the best strategy for a reviewer is to grade truthfully.

\bibliographystyle{abbrv}

\newpage

\section*{APPENDIX}
\label{app:A}

The next lemma provides an alternative expression for the expectation of the square of a random variable $\xi$ that we obtain through a linear transformation of another random variable $\xi - a$, with $a \in\mathbb{R}$.
\begin{lemma}
    \label{lem:1}
   For any random variable $\xi$ and any $a, b\in\mathbb{R}$:
     \begin{align}
     \label{lem:1statement}
       \Ex (\xi - a)^2 &=  \Ex(\xi - \Ex \xi)^2 + (\Ex \xi - b)^2 \notag \\
       &- 2(\Ex \xi - b)(a - b) + (b - a)^2 \eqpun .
   \end{align}
\end{lemma}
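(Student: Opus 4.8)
The plan is to establish the identity through a single algebraic splitting of $\xi - a$ into a zero-mean random part and two deterministic constants, followed by expanding the square and applying linearity of expectation. The only probabilistic input needed is $\E{\xi - \EE\xi} = 0$; everything else is elementary algebra.

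First I would use the telescoping identity
\begin{align*}
\xi - a = (\xi - \EE\xi) + (\EE\xi - b) - (a - b) \eqpun ,
\end{align*}
which is valid because the right-hand side collapses to $\xi - b - (a - b) = \xi - a$. Here the first summand is random with mean zero, while $\EE\xi - b$ and $a - b$ are constants.

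Next I would square both sides and take the expectation. Abbreviating $X = \xi - \EE\xi$, $Y = \EE\xi - b$, and $Z = a - b$, the square expands to $X^2 + Y^2 + Z^2 + 2XY - 2XZ - 2YZ$. Since $Y$ and $Z$ are deterministic and $\E{X} = 0$, the two cross terms $2XY$ and $-2XZ$ vanish under expectation, leaving
\begin{align*}
\E{(\xi - a)^2} = \E{(\xi - \EE\xi)^2} + (\EE\xi - b)^2 + (a - b)^2 - 2(\EE\xi - b)(a - b) \eqpun .
\end{align*}
Using $(a - b)^2 = (b - a)^2$ then matches this with the claimed equation~(\ref{lem:1statement}) term by term.

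I do not expect a genuine obstacle: this is a routine bias--variance style rearrangement. The only step requiring slight care is selecting the three-way decomposition of $\xi - a$ so that precisely one term is random and reproduces the variance term $\E{(\xi - \EE\xi)^2}$, while the two constant terms generate exactly the deterministic summands appearing in the statement. Once one observes that the cross terms containing the zero-mean factor $X$ disappear, the remainder is pure expansion.
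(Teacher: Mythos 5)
Your proof is correct and follows essentially the same route as the paper: both hinge on the decomposition $\xi - a = (\xi - \Ex\xi) + (\Ex\xi - b) - (a-b)$ and the vanishing of cross terms involving the zero-mean factor $\xi - \Ex\xi$. The only difference is organizational — the paper first adds and subtracts $\Ex\xi$ to get the variance/bias split and then separately expands $(\Ex\xi - a)^2$ around $b$, whereas you perform the three-way expansion in a single step — which changes nothing of substance.
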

\begin{proof}
We add and subtract $E\xi$ to $\xi - a$, to obtain the following transformation:
\begin{align}
    \Ex (\xi - a)^2 &= \Ex (\xi - \Ex \xi + \Ex \xi - a)^2 \notag \\
    \label{eq:lemma-1}
        &= \Ex(\xi - \Ex \xi)^2 + (\Ex \xi - a)^2 \eqpun .
\end{align}
In the last equality we used the fact that
\begin{align*}
 2 \Ex (\xi - \Ex \xi)(\Ex \xi - a) = 0 \eqpun .
\end{align*}
We obtain equality~\ref{lem:1statement} by combining equality (\ref{eq:lemma-1}) with the following transformation of ($\Ex \xi - a)^2$ by adding and subtracting $b$ to $E\xi - a$:
\begin{align*}
    (\Ex \xi - a)^2 &= (\Ex \xi -b + b - a)^2 \\
    &= (\Ex \xi - b)^2 - 2(\Ex \xi - b)(a - b) + (b - a)^2 \eqpun .
\end{align*}
\end{proof}

We use the next two lemmas to prove Theorem~\ref{th-unsup}.

\begin{lemma}
    \label{lem-long1}
    Consider function $l'_2(u, G)$ defined by (\ref{eq:loss-var-l2-part}), where $\mathcal{A}$, $\mathcal{A}'$ and $\Phi$ are as in the statement of Theorem~\ref{th-unsup}.
    Let reviewer $u \in U$ play with a strategy $\pi \in \mathcal{A}'$, and let every other reviewer $v\in U\backslash u$ play with strategies in $\Phi$.
    Let the error $\xi_\pi$ of user $u$ have variance $\sigma_u^2$ and expectation $b_u$.
    The expectation of expression (\ref{eq:loss-var-l2-part}) taken over the submission quality distribution and users errors has the form
    \begin{align}
        \label{lem-long1-eq1}
        \E{l'_{u, G}(u, G)} = \sigma_u^2 + b_u^2 + \Theta
    \end{align}
    where the term $\Theta$  does not depend on user $u$.
\end{lemma}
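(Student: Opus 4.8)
The plan is to substitute the admissible-strategy form of the grades into (\ref{eq:loss-var-l2-part}) and observe that the true qualities cancel, reducing each summand to a pure function of the reviewers' noise; the claim then follows from a per-submission bias--variance decomposition together with the mutual independence of the noise across reviewers.

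First I would write, for each submission $i \in \partial u$, the grade of user $u$ as $g_{iu} = q_i + \xi_{u,i}$, which is legitimate because $\pi \in \mathcal{A}'$ forces the identity modification function; here the $\xi_{u,i}$ are independent copies of $\xi_\pi$ with $\E{\xi_{u,i}} = b_u$ and $\Var{\xi_{u,i}} = \sigma_u^2$. Since every other reviewer plays a strategy in $\Phi \subset \mathcal{A}'$, I would likewise write $g_{iv} = q_i + \xi_{v,i}$ with $\E{\xi_{v,i}} = 0$. Substituting into the inner difference of (\ref{eq:loss-var-l2-part}) gives
\begin{align*}
 g_{iu} - \frac{1}{|\partial i|-1}\sum_{v\in\partial i\backslash u}g_{iv}
 = \xi_{u,i} - \bar\xi_i \eqpun , \qquad \bar\xi_i := \frac{1}{|\partial i|-1}\sum_{v\in\partial i\backslash u}\xi_{v,i} \eqpun .
\end{align*}
The crucial observation is that $q_i$ cancels, so each summand is independent of the submission qualities; hence the expectation over the quality distribution is vacuous, and only the expectation over the noise remains.

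Next I would evaluate $\E{(\xi_{u,i} - \bar\xi_i)^2}$ for fixed $i$. Conditioning on the other reviewers' errors, so that $\bar\xi_i$ plays the role of the constant $a$ in Lemma~\ref{lem:1}, and taking $b = \E{\xi_{u,i}} = b_u$ there, the two cross terms of (\ref{lem:1statement}) vanish and I obtain $\sigma_u^2 + (b_u - \bar\xi_i)^2$; integrating over $\bar\xi_i$ and using $\E{\bar\xi_i}=0$ yields
\begin{align*}
 \E{(\xi_{u,i} - \bar\xi_i)^2} = \sigma_u^2 + b_u^2 + \Var{\bar\xi_i} \eqpun .
\end{align*}
Here $\Var{\bar\xi_i}$ is determined entirely by the fixed strategies in $\Phi$ of the other reviewers on submission $i$ and by $|\partial i|$, so it contains no dependence on $\sigma_u^2$ or $b_u$. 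Averaging over $i\in\partial u$ with weight $1/|\partial u|$ then gives $\E{l'_2(u,G)} = \sigma_u^2 + b_u^2 + \Theta$ with $\Theta = \frac{1}{|\partial u|}\sum_{i\in\partial u}\Var{\bar\xi_i}$, which is independent of user $u$'s strategy, as claimed.

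The computation is essentially routine; the only point requiring care is the bookkeeping that cleanly separates the $u$-dependent contribution from $\Theta$. Concretely, I must verify that the cross term $\E{(\xi_{u,i}-b_u)\,\bar\xi_i}$ vanishes, which relies on the independence of user $u$'s noise from the other reviewers' noise together with the zero-mean property defining $\Phi$; and that the residual $\Var{\bar\xi_i}$ depends only on the fixed strategies of the other reviewers and on the graph structure, never on $(\sigma_u^2, b_u)$. Because membership in $\mathcal{A}'$ fixes the modification function to the identity, the cancellation of $q_i$ is automatic, so no assumption on the quality distribution beyond its existence is needed.
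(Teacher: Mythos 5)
Your proof is correct and follows essentially the same route as the paper's: both reduce $\E{l'_2(u,G)}$ to a per-submission bias--variance decomposition in which the cross terms vanish by the independence of $u$'s noise from the other reviewers' zero-mean noise, leaving $\sigma_u^2$, $b_u^2$, and a residual that depends only on the other reviewers and the graph. The only difference is organizational: you cancel $q_i$ up front by writing $g_{iv}=q_i+\xi_{v,i}$ (so the expectation over qualities becomes vacuous), whereas the paper keeps the grades, centers them at their $\xi$-expectations, and expands a three-term square --- the resulting identification of terms is the same.
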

\begin{proof}
    We use $E_{\xi}X$ to denote the expectation of an expression $X$ over all errors $\xi_v$ of users $v$ that are involved in $X$, and we use $E_{q}$ to denote the expectation over the true submission quality distribution.
    Without loss of generality, we let $|\partial u| = |\partial i| = n$.

    To extract the components of $\E{l'_2(u, G)}$ that depend on $\sigma_u^2$ and $b_u$,
    we add and subtract the terms $E_{\xi}g_{iu}$ and $\frac{1}{n-1}\sum_{v\in \partial i\backslash u} E_{\xi}g_{iv}$ inside the squared expression of (\ref{eq:loss-var-l2-part}).
    After we expand the square, three of the six summands of the expanded expression are equal to $0$.

    \begin{align*}
        \nonumber
        \E{l'_2(u, G)} =&\frac{1}{n}\sum_{i\in \partial u} E_q E_{\xi}\Big(\underbrace{g_{iu} - E_{\xi}g_{iu}}_{a}
                      +\underbrace{E_{\xi}g_{iu} - \frac{1}{n-1}\sum_{v\in \partial i\backslash u} E_{\xi}g_{iv}}_{b}  \\
        \nonumber
        -&\underbrace{\frac{1}{n-1}\sum_{v\in \partial i\backslash u}( g_{iv} - E_{\xi}g_{iv})}_{c} \Big)^2
        =\frac{1}{n}\sum_{i\in \partial u} E_q E_{\xi}(a + b - c)^2 \eqpun .
    \end{align*}
We apply the formula $(a + b - c)^2 = a^2 + b^2 + c^2 + 2ab - 2ac - 2bc$ to the expression, obtaining:
    \begin{align}
    \nonumber
    \E{l'_2(u, G)} =&\frac{1}{n}\sum_{i\in \partial u}\left\{E_qE_{\xi}(g_{iu} - E_{\xi}g_{iu})^2
    + E_qE_{\xi}\left(E_{\xi}g_{iu} - \frac{1}{n-1}\sum_{v\in \partial i\backslash u} E_{\xi}g_{iv}\right)^2 \right.\\
    \nonumber
    +& E_qE_{\xi}\left(\frac{1}{n-1}\sum_{v\in \partial i\backslash u} (g_{iv} - E_{\xi}g_{iv})\right)^2\\
    \label{lem-long1-eq-2}
    +& 2E_qE_{\xi}\left[\left(g_{iu} - E_{\xi}g_{iu}\right)\left(E_{\xi}g_{iu} - \frac{1}{n-1}\sum_{v\in \partial i\backslash u} E_{\xi}g_{iv}\right)\right] \\
    \label{lem-long1-eq-3}
    -& 2E_qE_{\xi}\left[\left(g_{iu} - E_{\xi}g_{iu}\right)\left(\frac{1}{n-1}\sum_{v\in \partial i\backslash u} (g_{iv} - E_{\xi}g_{iv})\right)\right] \\
    \label{lem-long1-eq-4}
    -& \left.2E_qE_{\xi}\left[\left(\frac{1}{n-1}\sum_{v\in \partial i\backslash u}(g_{iv} - E_{\xi}g_{iv})\right)\left(E_{\xi}g_{iu} - \frac{1}{n-1}\sum_{v\in \partial i\backslash u} E_{\xi}g_{iv}\right)\right]\right\} \eqpun .
    \end{align}
    Expression (\ref{lem-long1-eq-2}) is 0 because $E_{\xi}(g_{iu} - E_{\xi}g_{iu}) = 0$ and both factors under the expectation of (\ref{lem-long1-eq-2}) are independent from the error $\xi_u$ of user $u$.
    Similarly, expressions (\ref{lem-long1-eq-3}) and (\ref{lem-long1-eq-4}) are 0 too.
    Note that factors in expression (\ref{lem-long1-eq-3}) are independent from the error of user $u$ because
    the average grade $\frac{1}{n-1}\sum_{v\in \partial i \backslash} g_{iv}$ is computed without the grade by user $u$.  Therefore, we have:
    \begin{align}
    \label{lem-long1-eq-5}
    \E{l'_2(u, G)} &=\frac{1}{n}\sum_{i\in \partial u}\Big\{E_qE_{\xi}(g_{iu} - E_{\xi}g_{iu})^2 \\
    \label{lem-long1-eq-6}
    &+ E_qE_{\xi}\left(E_{\xi}g_{iu} - \frac{1}{n-1}\sum_{v\in \partial i\backslash u} E_{\xi}g_{iv}\right)^2 \\
    \label{lem-long1-eq-7}
    &+ E_qE_{\xi}\left(\frac{1}{n-1}\sum_{v\in \partial i\backslash u} g_{iv} - E_{\xi}g_{iv}\right)^2\Big\}  \eqpun .
    \end{align}
    The double-expectation expression in (\ref{lem-long1-eq-5}) is the variance $\sigma_u^2$.
    By definition of set $\Phi$, for every $v \in U\backslash u$ the expectation $E_{\xi}g_{iv}$ equals $q_i$.
    Therefore, expression (\ref{lem-long1-eq-6}) equals $b_u^2$.
    Expression (\ref{lem-long1-eq-7}) does not depend on user $u$.
    Combining all parts together, we obtain equation (\ref{lem-long1-eq1}).
\end{proof}

\begin{lemma}
    \label{lem-long2}
    Consider the function $\hat{\sigma}^2$ defined by (\ref{eq:loss-var-part-glob}).
    Let reviewer $u\in U$ play with strategy $\pi \in \mathcal{A}'$, and let every other reviewer $v\in U\backslash u$ play with strategies in $\Phi$.
    Let the error $\xi_\pi$ of user $u$ have variance $\sigma_u^2$ and expectation $b_u$.
    The expectation of expression (\ref{eq:loss-var-part-glob}) taken over the submission quality distribution and users errors can be written as:
    \begin{align}
        \label{lem-long2-eq1}
        \E{\hat{\sigma}^2} = \frac{n}{K} \sigma_u^2 + \frac{n(K-n)}{K(K-1)}b_u^2+\Theta
        \eqpun ,
    \end{align}
    where $n=|\partial u|$, $K = |E|$ and expression $\Theta$ does not depend on user $u$.
\end{lemma}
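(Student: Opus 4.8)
The plan is to reduce the sample variance to its sum-of-squares form and then isolate every term that carries the noise parameters $\sigma_u^2$ and $b_u$ of reviewer $u$, dumping everything else into $\Theta$. Writing $S = \sum_{(i,v,g_{iv})\in E} g_{iv}$ for the total of all grades and $Q = \sum_{(i,v,g_{iv})\in E} g_{iv}^2$ for the total of their squares, a one-line algebraic identity turns (\ref{eq:loss-var-part-glob}) into $\hat\sigma^2 = \frac{1}{K-1}\bigl(Q - S^2/K\bigr)$, so that $\E{\hat\sigma^2} = \frac{1}{K-1}\bigl(\E{Q} - \E{S^2}/K\bigr)$. I would model each grade as $g_{iv}=q_i+\xi_{iv}$, where the $n$ grades cast by $u$ have noise of mean $b_u$ and variance $\sigma_u^2$, while every other grade has zero-mean noise (the other reviewers play in $\Phi$); qualities are drawn independently with mean $Eq$ and variance $\sigma_q^2$, and all noises are mutually independent and independent of the qualities.

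First I would expand $\E{Q}$. For each of $u$'s $n$ edges, $\E{g_{iu}^2}=\E{q_i^2}+2b_u\,Eq+\sigma_u^2+b_u^2$, whereas the remaining $K-n$ grade-squares have expectation free of $u$. Collecting terms, the $u$-dependent part of $\E{Q}$ is $n\bigl(2b_u\,Eq+\sigma_u^2+b_u^2\bigr)$, the rest being absorbed into $\Theta$. Next I would compute $\E{S^2}=\Var{S}+(\E{S})^2$. Here $\E{S}=K\,Eq+n b_u$, so the $u$-dependent part of $(\E{S})^2$ is $2Kn\,Eq\,b_u+n^2 b_u^2$. Because qualities are independent of all noises, $\Var{S}$ splits into a quality term that does not involve $u$ and a noise term $\sum_e \Var{\xi_{e}}$, of which only $u$'s $n$ edges contribute $n\sigma_u^2$; hence the $u$-dependent part of $\E{S^2}$ is $n\sigma_u^2+2Kn\,Eq\,b_u+n^2 b_u^2$.

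Finally I would assemble $\E{Q}-\E{S^2}/K$. The linear-in-$b_u$ contributions $2n\,Eq\,b_u$ present in both $\E{Q}$ and $\E{S^2}/K$ cancel exactly, leaving $n\sigma_u^2\frac{K-1}{K}+n b_u^2\frac{K-n}{K}$ as the $u$-dependent part; dividing by $K-1$ produces $\frac{n}{K}\sigma_u^2+\frac{n(K-n)}{K(K-1)}b_u^2$, which is (\ref{lem-long2-eq1}). The main obstacle is bookkeeping rather than any single hard estimate: the submissions reviewed by $u$ are also graded by other reviewers, so a naive partition into ``$u$'s block'' and ``the rest'' manufactures cross-covariances through the shared qualities $q_i$. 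Routing the entire computation through $\Var{S}+(\E{S})^2$ sidesteps this, since the quality covariance structure is manifestly independent of $u$'s noise parameters, and the one delicate point, the cancellation of the $b_u\,Eq$ cross terms, then falls out automatically and is exactly what leaves only the quadratic $b_u^2$ term in the final expression.
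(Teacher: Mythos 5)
Your proof is correct, but it takes a genuinely different route from the paper's. You rewrite the sample variance via the algebraic identity $\hat\sigma^2 = \frac{1}{K-1}\bigl(Q - S^2/K\bigr)$ and then compute $\E{Q}$ and $\E{S^2} = \Var{S} + (\E{S})^2$ directly, isolating the $u$-dependent pieces $n(2b_u Eq + \sigma_u^2 + b_u^2)$ and $n\sigma_u^2 + 2KnEq\,b_u + n^2b_u^2$ respectively; the $b_uEq$ cross terms cancel and the remainder divided by $K-1$ is exactly (\ref{lem-long2-eq1}). The paper instead splits the sum in (\ref{eq:loss-var-part-glob}) into the terms indexed by $u$'s edges (its component $A$) and the remaining edges (its component $B$), and for each squared deviation performs an add-and-subtract decomposition into a noise-fluctuation part, a bias part, and a residual, killing the cross terms by independence and summing the pieces. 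Your route buys brevity and robustness: it sidesteps the delicate point in the paper's component $B$, where the grades of \emph{other} reviewers still depend on $u$'s noise through the global mean, because $\Var{S}$ and $(\E{S})^2$ separate quality randomness from noise randomness once and for all. The paper's route, while longer, mirrors the term-by-term structure used for the local estimator in Lemma~\ref{lem-long1}, making the two lemmas' proofs parallel. Your argument rests on the same implicit modeling assumptions the paper uses (independent noise realizations across the submissions $u$ grades, a common quality distribution so that the quality contributions to $\E{Q}$, $\E{S}$, and $\Var{S}$ are $u$-independent), so no gap arises from them.
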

\begin{proof}
To show the statement of the lemma we split $\hat{\sigma}$ into the following two components $A$ and $B$:
\begin{align*}
    E_qE_{\xi} \hat{\sigma} =& \frac{1}{K - 1}\sum_{i \in \partial u}\underbrace{E_qE_{\xi}\left(g_{iu} - \frac{1}{K}\sum_{(j, w) \in G}g_{jw}\right)^2}_{A} \\
    +& \frac{1}{K - 1} \sum_{(i, v)\in G, v\ne u}\underbrace{E_qE_{\xi}\left(g_{iv} - \frac{1}{K}\sum_{(j, w) \in G}g_{jw}\right)^2}_{B}  \eqpun .
\end{align*}
We simplify expressions $A$ and $B$ separately and introduce subcomponents $C$ and $D$.
\begin{align*}
A =& E_qE_{\xi} \left(g_{iu}- \frac{1}{K}\sum_{j \in \partial u}g_{ju} - \frac{1}{K}\sum_{(j,w) \in G, w\ne u} g_{jw}\right)^2 \\
=& E_qE_{\xi} \left(\underbrace{\frac{K-1}{K}g_{iu}- \frac{1}{K}\sum_{j \in \partial u, j\ne i}g_{ju}}_{C} - \underbrace{\frac{1}{K}\sum_{(j,w) \in G, w\ne u} g_{jw}}_{D}\right)^2\\
=& E_qE_{\xi}\left(\underbrace{C - E_{\xi}C}_{a} + \underbrace{E_{\xi}C - E_{\xi}D}_{b} - \underbrace{(D - E_{\xi}D)}_{c}\right)^2\\
=& E_qE_{\xi}(a^2 + b^2 + c^2 + 2ab -2ac - 2bc)\\
=& E_qE_{\xi} a^2 + E_qE_{\xi} b^2 + E_qE_{\xi} c^2 \eqpun .
\end{align*}
In the last equality we used the fact that expressions $a, b$ and $c$ are independent of user errors $\xi$ and $E_{\xi}a = E_{\xi}c = 0$.
To compute $E_{\xi}a^2$ we notice that there is a variance of random noises involved in expression $C$.
Using the formula for the variance of a sum of independent random variables, we obtain:
\begin{align*}
 E_{\xi}a^2 = \frac{(K-1)^2}{K^2}\sigma_u^2 + \frac{n-1}{K^2}\sigma_u^2 = \frac{(K-1)^2 + n - 1}{K^2} \sigma_u^2 \eqpun .
\end{align*}
The expectation of expression $b^2$ is:
%
\begin{align}
    \label{lem-long2-eq-b2}
    E_q\left(E_{\xi}C - E_{\xi}D\right)^2 = E_q\left(\frac{K-1}{K}E_{\xi}g_{iu}- \frac{1}{K}\sum_{j \in \partial u, j\ne i}E_{\xi}g_{ju}\right.
     - \left.\frac{1}{K}\sum_{(j,w) \in G, w\ne u} E_{\xi}g_{jw}\right)^2 \eqpun .
\end{align}
%
We modify expression (\ref{lem-long2-eq-b2}) using the fact that $E_{\xi}g_{iw}=q_i$ for $w\ne u$ and $E_{\xi}g_{iu} = q_i + E_{\xi}\xi_{u}$.
The expression $E_qE_{\xi}b^2$ becomes:
\begin{align*}
    E_q&\left(\frac{K-1}{K}E_{\xi}\xi_u - \frac{1}{K}\sum_{j \in \partial u, j\ne i}E_{\xi}\xi_{ju} + q_i - \frac{1}{K}\sum_{(j,w) \in G} q_{j}\right)^2
    = E_q\left(\frac{K-n}{K}E_{\xi}\xi_u + q_i - \frac{1}{K}\sum_{(j,w) \in G} q_{j}\right)^2\\
    =& E_q\left(\frac{K-n}{K}E_{\xi}\xi_u\right)^2
    + 2E_q\left[\left(\frac{K-n}{K}E_{\xi}\xi_u\right)\left(q_i - \frac{1}{K}\sum_{(j,w) \in G} q_{j}\right)\right]
    + E_q\left(q_i - \frac{1}{K}\sum_{(j,w) \in G} q_{j}\right)^2\\
    =& \left(\frac{K-n}{K}\right)^2E_{\xi}\xi_u^2 + E_q\left(q_i - \frac{1}{K}\sum_{(j,w) \in G} q_{j}\right)^2 \eqpun .
\end{align*}
In the last equality, we used the fact that $E_q\left(q_i - \frac{1}{K}\sum_{(j,w) \in G} q_{j}\right) = 0$, because
$E_qq_i = \frac{1}{K}\sum_{(j,w) \in G} E_{q}q_{j}$.
If use $\Theta$ to denote the part of $E_qE_{\xi}b^2$ that does not depend on user $u$, then
\begin{align*}
    E_qE_{\xi}b^2 = \left(\frac{K-n}{K}\right)^2E_{\xi}\xi_u^2 + \Theta \eqpun .
\end{align*}
We note that $E_{\xi}c^2$ does not depend on user $u$.
Combining all parts together, the expression for component $A$ is
\begin{align*}
    A = \frac{(K-1)^2 + n - 1}{K^2}\sigma_u^2 + \frac{(K-n)^2}{K^2}\left(E_{\xi}\xi_u\right)^2 + \Theta \eqpun ,
\end{align*}
where we use $\Theta$ again to denote components that do not depend on user $u$.
We now compute the expression for component $B$:
\begin{widetext}
\begin{align*}
    B &= E_qE_{\xi}\left(g_{iv} - \frac{1}{K}\sum_{(j, w) \in G}g_{jw}\right)^2 = E_qE_{\xi}\left(\underbrace{-\frac{1}{K}\sum_{j\in \partial u}g_{ju}}_{F} + \underbrace{g_{iv} - \frac{1}{K}\sum_{(j, w) \in G, w\ne u}g_{jw}}_{H} \right)^2 \\
    &=E_qE_{\xi}\left(\underbrace{F - E_{\xi}F}_{a} + \underbrace{E_{\xi}F - E_{\xi}H}_{b} + \underbrace{H - E_{\xi}H}_{c}\right)^2 = E_qE_{\xi}(a^2 + b^2 + c^2 + 2ab + 2ac + 2bc)\\
    &= E_qE_{\xi}a^2 + E_qE_{\xi}b^2 + E_qE_{\xi}c^2 \eqpun .
\end{align*}
\end{widetext}
Again, we use the fact that $a, b$ and $c$ are independent with respect to users noise $\xi_v, v \in U$ and that $E_{\xi}a = E_{\xi}c = 0$.
Using the formula for the variance of a sum of independent random variables, we obtain $E_{\xi} a^2 = \frac{n}{K^2}\sigma_u^2$.
We use assumptions $E_{\xi}g_{iw}=q_i$ for $w\ne u$ and $E_{\xi}g_{iu} = q_i + E_{\xi}\xi_{u}$ to simplify expectation $E_{\xi}b^2$.
\begin{widetext}
\begin{align*}
    E_{\xi} b^2 &= E_q\left(-\frac{1}{K}\sum_{j\in \partial u}E_{\xi}g_{ju}  +  E_{\xi}g_{iv} - \frac{1}{K}\sum_{(j, w) \in G, w\ne u}E_{\xi}g_{jw}  \right)^2 \\
    &= E_q\left(-\frac{1}{K}\sum_{j\in \partial u}E_{\xi}\xi_{u}  +  q_{i} - \frac{1}{K}\sum_{(j, w) \in G}q_{j}  \right)^2\\
    &= \frac{n^2}{K^2}\left(E_{\xi}\xi_u\right)^2 + 2 \left(-\frac{1}{N}E_{\xi}\xi_u\right)E_q\left(q_i - \frac{1}{K}\sum_{(j, w) \in G}q_{j} \right) + \Theta \eqpun .
\end{align*}
\end{widetext}
Expression $E_q\left(q_i - \frac{1}{K}\sum_{(j, w) \in G}q_{j}\right)$ is 0,
as $E_qq_i = \frac{1}{K}\sum_{(j, w) \in G}E_{q}q_{j}$.
Thus, the expression for terms of $E_{\xi}b^2$ that depend on user $u$ is:
\begin{align*}
    E_{\xi} b^2 = \frac{n^2}{K^2}\left(E_{\xi}\xi_u\right)^2 + \Theta \eqpun .
\end{align*}
To obtain terms in expression $B$ that depend on user $u$, we notice that $E_qE_{\xi}c^2$ does not depend on $u$.
Combining all parts together, we obtain:
\begin{align*}
    B = \frac{n}{K^2}\sigma_u^2 + \frac{n^2}{K^2}\left(E_{\xi}\xi_u\right)^2 + \Theta
    \eqpun .
\end{align*}
We have extracted terms that depend on user $u$ in expressions $A$ and $B$.
Denoting $E_\xi\xi_u$ as $b_u$ and combining expressions $A$ and $B$, we obtain
\begin{align*}
    E_qE_{\xi} \hat{\sigma} =& \frac{1}{K - 1}\sum_{i\in \partial u} \left(\frac{(K-1)^2 +n-1}{K^2}\sigma_u^2 + \frac{(K-n)^2}{K^2}b_u^2\right) + \frac{1}{K - 1} \sum_{(i, v)\in G, v\ne u}\left( \frac{n}{K^2}\sigma_u^2 + \frac{n^2}{K^2}b_u^2\right)  + \Theta\\
     =& \frac{n}{K - 1}\left(\frac{(K-1)^2 +n-1}{K^2}\sigma_u^2 + \frac{(K-n)^2}{K^2}b_u^2\right)
     + \frac{K-n}{K - 1}\left( \frac{n}{K^2}\sigma_u^2 + \frac{n^2}{K^2}b_u^2\right)  + \Theta\\
     =& \frac{n(K-1)^2 + n(n-1) + n(K-n)}{(K-1)K^2}\sigma_u^2
     + \frac{n(K-n)^2 + n^2(K-n)}{(K-1)K^2}b_u^2 + \Theta\\
     =& \frac{n}{K}\sigma_u^2 + \frac{n(K - n)}{K(K-1)}b_u^2 + \Theta \eqpun .
\end{align*}
Thus, we obtain equation (\ref{lem-long2-eq1}).
We note that for $N=n$ equation (\ref{lem-long2-eq1}) becomes equation (\ref{lem-long1-eq1}).
\end{proof}

\end{document}